\newtheorem{theorem}{Theorem}
\newtheorem{proposition}{Proposition}
\newtheorem{remark}{Remark}
\begin{document}

\title{Coordinated Beamforming for Energy Efficient Transmission in Multicell Multiuser Systems}

\author{Shiwen~He, Yongming~Huang,~\IEEEmembership{Member,~IEEE}, Shi~Jin,~\IEEEmembership{Member,~IEEE}, and Luxi~Yang,~\IEEEmembership{Member,~IEEE}
\thanks{S. He, Y. Huang, S. Jin and L. Yang are with the School of Information Science and Engineering, Southeast University, Nanjing 210096, China. (Email:\{hesw01, huangym, jinshi, lxyang\}@seu.edu.cn).}
}

\maketitle

\begin{abstract}
In this paper we study energy efficient joint power allocation and beamforming for coordinated multicell multiuser downlink systems. The considered optimization problem is in a non-convex fractional form and hard to tackle. We propose to first transform the original problem into an equivalent optimization problem in a parametric subtractive form, by which we reach its solution through a two-layer optimization scheme. The outer layer only involves one-dimension search for the energy efficiency parameter which can be addressed using the bi-section search, the key issue lies in the inner layer where a non-fractional sub-problem needs to tackle. By exploiting the relationship between the user rate and the mean square error, we then develop an iterative algorithm to solve it. The convergence of this algorithm is proved and the solution is further derived in closed-form. Our analysis also shows that the proposed algorithm can be implemented in parallel with reasonable complexity. Numerical results illustrate that our algorithm has a fast convergence and achieves near-optimal energy efficiency. It is also observed that at the low transmit power region, our solution almost achieves the optimal sum rate and the optimal energy efficiency simultaneously; while at the middle-high transmit power region, a certain sum rate loss is suffered in order to guarantee the energy efficiency.
\end{abstract}

\begin{IEEEkeywords}
Energy Efficiency Maximization, Fractional Programming, Beamforming and Power Allocation, Multiple-Input Single-Output
\end{IEEEkeywords}

\section*{\sc \uppercase\expandafter{\romannumeral1}. Introduction}

The increasing demands for better services in wireless communications involve higher transmission rate, lower error rate and enhanced coverage. In order to achieve these objectives, advanced wireless transmission and signal processing techniques have been intensively investigated in the literature~\cite{ConShamai2001,Zhang2004}. Recently energy consumption problem has attracted increasing interest, due to the fact that low energy efficiency in wireless communications will result in high cost for the devices especially the mobile terminals, have negative impact on the environment and even cause health problems~\cite{ConCheng2011}. How to trade off the relationship between the system capacity and the energy consumption has become a key issue for future wireless communications~\cite{MagChen2011,MagXu2011,MagHan2011,ConKumar2011,ConArnold2010}.

It is well known that multiple-input multiple-output (MIMO) technology provides extra degrees of freedom and brings multiplexing and diversity gains. As a result, multi-user MIMO (MU-MIMO) transmission has attracted a lot of research interest in the past few decades and enables significant performance enhancement without additional transmit power and bandwidth resource~\cite{BookTse2005,TWCMarzetta2010}. In particular, it is shown in~\cite{TWCMarzetta2010} that massive MIMO with a large number of antennas equipped at the base station (BS) promises much improved spectral and energy efficiency. In addition, as a powerful tool to mitigate the inter-cell interference resulting from aggressive frequency reuse, BS cooperation, also known as network MIMO or coordinated multi-point transmission and reception (CoMP), has recently received much attention~\cite{JSACGESbert2010,BookGan2013,Dahrouj2010,TSPHuang2011,TWCHuang2012,TWCHuang2013}. An overview of state of the art multicell MIMO cooperation techniques is presented in~\cite{JSACGESbert2010,BookGan2013}. In~\cite{Dahrouj2010}, a coordinated beamforming algorithm is proposed to minimize the transmit power subject to given SINR constraints. Later, with the goal of maximizing the worst user rate, a distributed multicell beamforming
solution is reached which only requires limited intercell coordination~\cite{TSPHuang2011}. By exploiting the property of massive MIMO channels, in~\cite{TWCHuang2013} a distributed coordinated power allocation method is developed to balance the weighted SINR in a multicell massive multiple input single output (MISO) downlink system. Considering the user fairness, a distributed coordinated beamforming scheme is designed to achieve the Pareto boundary of user rate tuples~\cite{TWCHuang2012}, by deriving an approximate uplink-downlink duality. Besides, the sum rate maximization problem for the coordinated beamforming is also widely studied by using the relationship between the user rate and the minimum mean square error (MMSE)~\cite{TWCChristensen2008,ICCJose2011,TSPShi2011,TSPBogale2012} or the branch and bound method~\cite{TSPWeeraddana2011}. In particular, a distributed sum rate maximization solution is achieved for the multicell beamforming system, in which only limited intercell signalling exchange is needed~\cite{TSPWeerad2013}.

Note that these aforementioned references are only concerned with the system throughput or spectral efficiency. Energy efficient system design, which adopts energy efficiency (bit-per-Joule) as the performance metric, has recently drawn much attention in both industry and academia~\cite{TWCVilardeb2010,TCMMiao2010,TWCMiao2011,TWCNg201209,TWCNg201210}. The energy efficiency bound of the relay channel under additive white Gaussian noise is analyzed and computed in~\cite{TWCVilardeb2010}. In~\cite{TCMMiao2010}, a link adaptive transmission method is proposed to maximize energy efficiency by adapting both overall transmit power and its allocation. Later, an energy efficient power optimization scheme is further developed for interference-limited wireless communications~\cite{TWCMiao2011}. In addition, energy efficient resource allocation has been widely studied for the orthogonal frequency division multiple access (OFDMA) downlink systems with a large number of transmit antennas and fixed beamformers or for the multicell OFDMA downlink network with cooperative BSs and single transceiver antenna~\cite{TWCNg201209,TWCNg201210}. More recently, the energy efficient transmission design for massive MIMO systems and small cell networks has become a hot topic due to the potential of significantly improving both the spectral and the energy efficiency~\cite{TCOMNgo2013,ICTBjornson2013}. It is worth mentioning that all these works above only consider simple transceivers where the transmitter is equipped with a single antenna or with a fixed beamformer. The joint optimization of energy efficient power allocation and beamforming is still an open problem.

Motivated by this, in this paper we aim to design an energy efficient transmission for multicell multi-user MISO (MU-MISO) downlink system by jointly optimizing the transmit powers and beamforming vectors. The original problem is non-convex and is difficult to solve directly due to the coupling between variables and its fractional form. We propose to first transform it into an equivalent subtractive-form optimization problem by exploiting the fractional programming~\cite{JstorJagan1966,JstorDink1967,MathCrouzeix1991}. We further reveal that this equivalent problem can be solved using one dimension search method~\cite{BookBoyd2004}, in each search a sub-problem needs to address. Then, we develop an efficient optimization beamforming algorithm to solve the sub-problem and further prove its convergence. The computational complexity of the proposed algorithm is also analyzed using the real floating point operation method~\cite{BookGolub1996}, showing a reasonable complexity. Finally, numerical results validate the effectiveness of the developed algorithm and show that our algorithm is able to achieve simultaneously both the maximum sum rate and the maximum energy efficiency at the low transmit power region, while at the middle-high transmit power region high sum rate does not necessarily brings high energy efficiency.

This rest of this paper  is organized as follows. The system model is described in Section \uppercase\expandafter{\romannumeral2}. In Section \uppercase\expandafter{\romannumeral3}, an energy efficient beamforming algorithm is proposed for the multicell MU-MISO downlink system subject to per-BS power constraints. The computational complexity and the parallel implementation of the proposed algorithm are analyzed in \uppercase\expandafter{\romannumeral4}. The simulation results are shown in Section \uppercase\expandafter{\romannumeral5} and conclusions are finally given in Section \uppercase\expandafter{\romannumeral6}.

The following notations are used throughout this paper. Bold lowercase and uppercase letters
represent column vectors and matrices, respectively. The superscript \textsuperscript{T},
\textsuperscript{H}, \textsuperscript{*} and \textsuperscript{\dag} represent the transpose operator, conjugate transpose
operator, conjugate operator and the Moore Penrose pseudo-inverse of matrix, respectively. $\bm{A}_{m,n}$ represents the $\left(m{\rm th},n{\rm th}\right)$ element of matrix $\bm{A}$. $\|\cdot\|$ denotes the $\ell_{2}$ norm.

\section*{\sc \uppercase\expandafter{\romannumeral2}. System Model}

As illustrated in Fig.~\ref{SystemModel}, consider a $K$-cell MU-MISO downlink system where BS-$j$ is equipped with $M_{j}$ transmit antennas and serves $N_{j}$ single-antenna users in cell $j$, $j=1,\cdots,K$. We denote the $k$-th user in cell $j$ as User-$\left(j,k\right)$ and BS in cell $m$ as BS-$m$. Then, the received signal of the User-$\left(j,k\right)$ is denoted as
\begin{equation}\label{EngergyEfficiency1}
y_{j,k}=\sum_{m=1}^{K}\bm{h}_{m,j,k}^{H}\sum_{n=1}^{N_{m}}\bm{w}_{m,n}x_{m,n}+z_{j,k}
\end{equation}
where $\bm{h}_{m,j,k}\in\mathbb{C}^{M_m}$ denotes the flat fading channel coefficient between BS-$m$ and User-$\left(j,k\right)$ which includes the large scale fading, the small scale fading and shadow fading, $\bm{w}_{j,k}$ denotes the beamforming vector for User-$\left(j,k\right)$, $x_{j,k}$ denotes the information signal intended for User-$\left(j,k\right)$ with $\mathbb{E}\left\{x_{j,k}\right\}=0$ and $\mathbb{E}\left\{\|x_{j,k}\|^{2}\right\}=1$, and $z_{j,k}$ is a zero-mean circularly symmetric complex Gaussian random noise with variance $\sigma_{j,k}^{2}$. We further assume that the signals for different users are independent from each other and the receiver noise. The instantaneous rate of User-$\left(j,k\right)$ is calculated as\footnote{The logarithm with $e$ as the base is used throughout this paper.}
\begin{equation}\label{EngergyEfficiency2}
R_{j,k}=\log\left(1+\frac{\|\bm{h}_{j,j,k}^{H}\bm{w}_{j,k}\|^{2}}
{\Upsilon_{j,k}+\sigma_{j,k}^{2}}\right)
\end{equation}
where $\Upsilon_{j,k}$ denotes the interference signal strength which includes the intra-cell inter-user interference signal strength and the inter-cell interference signal strength and is given by
\begin{equation}
\Upsilon_{j,k}=\sum\limits_{n=1, n\neq k}^{N_{j}}\|\bm{h}_{j,j,k}^{H}\bm{w}_{j,n}\|^{2}+
\sum\limits_{m=1, m\neq j}^{K}\sum_{n=1}^{N_{m}}\|\bm{h}_{m,j,k}^{H}\bm{w}_{m,n}\|^{2}.
\end{equation}
\begin{figure}[ht]
\centering
\captionstyle{flushleft}
\onelinecaptionstrue
\includegraphics[width=0.8\columnwidth,keepaspectratio]{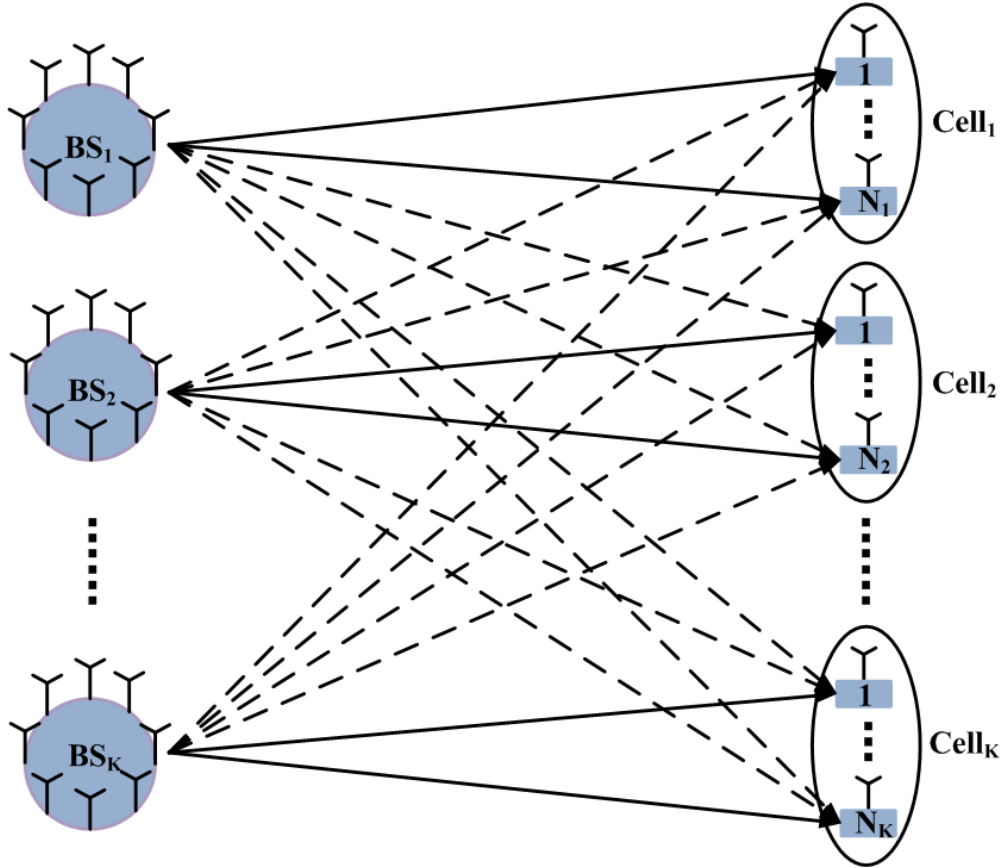}\\
\caption{The diagram of coordinated multicell beamforming system.}
\label{SystemModel}
\end{figure}

For notational convenience, let $\bm{W}_{j}=\left\{\bm{w}_{j,1},\cdots,\bm{w}_{j,N_{j}}\right\}$ denote the multiuser precoder set of BS-$j$ and let $\bm{W}=\left\{\bm{W}_{j},\cdots,\bm{W}_{K}\right\}$ denote the collection of all the precoders. The energy efficiency of interest is defined as the ratio of the weighted sum rate to the total power consumption, given by
\begin{equation}\label{EngergyEfficiency3}
f\left(\bm{W}\right)=\frac{f_{1}\left(\bm{W}\right)}{f_{2}\left(\bm{W}\right)}
=\frac{\sum\limits_{j,k}\alpha_{j,k}R_{j,k}}{\xi\sum\limits_{j,k}\|\bm{w}_{j,k}\|^{2}
+\sum\limits_{j}\left(M_{j}P_{c}+P_{0}\right)}
\end{equation}
where the weight $\alpha_{j,k}$ is used to represent the priority of User-$\left(j,k\right)$ in the system, $\xi\geq 1$ is a constant which accounts for the inefficiency of the power amplifier, $P_{c}$ is the constant circuit power consumption per antenna which includes the power dissipations in the transmit filter,  mixer, frequency synthesizer, and digital-to-analog converter, and $P_{0}$ is the basic power consumed at the BS which is independent of the number of transmit antennas~\cite{TWCNg201209,TWCNg201210}. In order to obtain a tradeoff between the sum rate and the total power consumption, the energy efficiency maximization performance criterion is adopted, given by
\begin{equation}\label{EngergyEfficiency5}
\max_{\bm{W}}~f\left(\bm{W}\right)
~s.t.~ \sum_{k=1}^{N_{j}}\|\bm{w}_{j,k}\|^{2}\leq P_{j},\forall j
\end{equation}
where $P_{j}$ is the transmit power constraint of BS-$j$. As a comparison, the traditional weighted sum rate optimization problem is usually defined as
\begin{equation}\label{MaxSumRate}
\max_{\bm{W}}~f_{1}\left(\bm{W}\right)
~s.t.~ \sum_{k=1}^{N_{j}}\|\bm{w}_{j,k}\|^{2}\leq P_{j},\forall j.
\end{equation}
Different from (\ref{EngergyEfficiency5}), in (\ref{MaxSumRate}), it is only concerned with how to maximize the sum rate, without taking the power consumption into account. Note that the coupling of optimization variables leads to that the problem (\ref{EngergyEfficiency5}) and (\ref{MaxSumRate}) become  non-convex and thus are difficult to solve directly. Furthermore, the fractional form of the objective function in (\ref{EngergyEfficiency5}) results in that common optimization approaches are not applicable now. Considering the transmit power constraints, the following inequalities are easily obtained.
\begin{subnumcases}{\label{EngergyEfficiency6}}
0\leq f_{1}\left(\bm{W}\right)\leq R_{max}\\
\sum\limits_{j}\left(M_{j}P_{c}+P_{0}\right)\leq f_{2}\left(\bm{W}\right)\\
f_{2}\left(\bm{W}\right)\leq \sum\limits_{j}\left(P_{j}+M_{j}P_{c}+P_{0}\right)
\end{subnumcases}
where $R_{max}=\sum\limits_{j,k}\log\left(1+\frac{P_{j}\|\bm{h}_{j,j,k}\|^{2}}{\sigma_{j,k}^{2}}\right)$ is the maximum rate achieved with the maximum transmit power and without considering the inter-cell interference and the intra-cell interference.

\section*{\sc \uppercase\expandafter{\romannumeral3}. Energy Efficient Beamforming Algorithm Design}

In this section, we will design a two-layer optimization scheme to solve the non-convex problem (\ref{EngergyEfficiency5}). By exploiting the relationship between the fractional and the parametric programming problems~\cite{JstorJagan1966,JstorDink1967,MathCrouzeix1991}, the original fractional problem is first transformed into an equivalent non-fractional problem. The equivalent problem is further cast into a tractable form using the relationship between the user rate and the MSE of the optimal receiver. Based on this, an optimization algorithm is finally developed to reach the solution to (\ref{EngergyEfficiency5}).

\subsection*{A. Equivalent  Optimization Problem}

It is easy to see that the optimization problem (\ref{EngergyEfficiency5}) belongs to a classical fractional programming problem. To solve it, as revealed in~\cite{JstorJagan1966,JstorDink1967,MathCrouzeix1991}, a common approach is to transform it into a linear programming problem by adopting nonlinear variable transformation. Following this idea, here we exploit the relationship between the fractional programming and the parametric programming problems to reformulate problem (\ref{EngergyEfficiency5}) into the following univariate equation
\begin{equation}\label{EngergyEfficiency7}
F\left(\eta\right)=0
\end{equation}
where the function $F:\mathbb{R}\longrightarrow\mathbb{R}$ is defined by
\begin{equation}\label{EngergyEfficiency8}
F\left(\eta\right)=\max_{\bm{W}\in \mathbb{D}}\left\{f_{1}\left(\bm{W}\right)-\eta f_{2}\left(\bm{W}\right)\right\}
\end{equation}
with $\mathbb{D}=\left\{\bm{W}\Big| \sum\limits_{k}^{N_{j}}\|\bm{w}_{j,k}\|^{2}\leq P_{j}, \forall j\right\}$. To clarify its equivalence to the primal problem, it is interesting to first note that the univariate function $F\left(\eta\right)$ has some especially pleasant properties summarized in the following theorem, which is similar to the results obtained in~\cite{JstorJagan1966,JstorDink1967}.
\begin{theorem}\label{EETheorem1}
Let $F:\mathbb{R}\longrightarrow\mathbb{R}$ be defined by (\ref{EngergyEfficiency8}). Then, the following statements hold.
\begin{itemize}
\item[(a)] $F$ is convex over $\mathbb{R}$.
\item[(b)] $F$ is continuous at any $\eta\in\mathbb{R}$.
\item[(c)] $F$ is strictly decreasing.
\item[(d)] $F\left(\eta\right)=0$ has a unique solution.
\end{itemize}
\end{theorem}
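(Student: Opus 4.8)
The plan is to establish the four properties in sequence, leveraging the fact that $F(\eta)$ is defined as a pointwise maximum over $\bm{W}$ of the family of functions $g_{\bm{W}}(\eta) = f_1(\bm{W}) - \eta f_2(\bm{W})$, each of which is \emph{affine} in $\eta$ (since $f_1(\bm{W})$ and $f_2(\bm{W})$ do not depend on $\eta$). For part (a), I would simply invoke the standard fact that the supremum of a family of affine (hence convex) functions is convex; the feasible set $\mathbb{D}$ is compact and nonempty, so the maximum is attained and $F$ is finite everywhere, making it a proper convex function on all of $\mathbb{R}$.

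For part (b), continuity is immediate from part (a): a finite convex function on $\mathbb{R}$ is automatically continuous on the interior of its domain, which here is all of $\mathbb{R}$. Alternatively one can argue directly that $F$ is the maximum of affine functions with uniformly bounded slopes (since $f_2(\bm{W})$ is bounded above and below on $\mathbb{D}$ by the inequalities in (\ref{EngergyEfficiency6})), hence Lipschitz on bounded sets. For part (c), the key observation is that $f_2(\bm{W}) \geq \sum_j (M_j P_c + P_0) > 0$ strictly, uniformly in $\bm{W}\in\mathbb{D}$. Thus for $\eta_1 < \eta_2$ and any $\bm{W}$, we have $g_{\bm{W}}(\eta_2) = g_{\bm{W}}(\eta_1) - (\eta_2-\eta_1) f_2(\bm{W}) \leq g_{\bm{W}}(\eta_1) - (\eta_2-\eta_1)\sum_j(M_jP_c+P_0)$. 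Taking the maximum over $\bm{W}$ on both sides (using that the bound is uniform) gives $F(\eta_2) \leq F(\eta_1) - (\eta_2-\eta_1)\sum_j(M_jP_c+P_0) < F(\eta_1)$, so $F$ is strictly decreasing.

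For part (d), existence of a root follows from the intermediate value theorem once I exhibit one point where $F$ is nonnegative and one where it is negative: taking any feasible $\bm{W}_0$ shows $F(0) = \max f_1(\bm{W}) \geq 0$ (and in fact $>0$ unless all channels are trivial), while the strict-decrease rate from part (c) combined with the upper bound $f_1(\bm{W}) \leq R_{\max}$ forces $F(\eta) \leq R_{\max} - \eta\sum_j(M_jP_c+P_0) \to -\infty$ as $\eta\to\infty$, so $F(\eta) < 0$ for $\eta$ large. Continuity (part (b)) then yields a root, and strict monotonicity (part (c)) yields uniqueness. I expect no serious obstacle here; the only point requiring care is making sure the estimates in part (c) are genuinely uniform in $\bm{W}$ so that the inequality survives the maximization, which is exactly what the constant lower bound $\sum_j(M_jP_c+P_0)$ on $f_2$ guarantees.
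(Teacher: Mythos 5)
Your proposal is correct and follows essentially the same route as the paper: convexity of $F$ as a pointwise maximum of functions affine in $\eta$ (the paper writes out the two-point inequality explicitly where you cite the standard fact), continuity from finite convexity on $\mathbb{R}$, and existence/uniqueness of the root from the bounds in (\ref{EngergyEfficiency6}) combined with continuity and strict monotonicity. The only difference worth noting is that you actually supply the uniform-lower-bound argument for strict decrease, i.e.\ $F(\eta_2)\leq F(\eta_1)-(\eta_2-\eta_1)\sum_j\left(M_jP_c+P_0\right)$, a step the paper dismisses with ``can be easily proven''; your version is the more complete one.
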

\begin{proof}
Let real numbers $\eta_{1}$ and $\eta_{2}$ be arbitrarily chosen so that $\eta_{1}\neq\eta_{2}$, and the corresponding optimal solutions of problem (\ref{EngergyEfficiency8}) are $\bm{W}^{1}$ and $\bm{W}^{2}$, $\forall i$, respectively.

(a) For any $0<\beta< 1$, let $\eta_{3}=\beta\eta_{1}+\left(1-\beta\right)\eta_{2}$ and the corresponding optimal solutions of problem (\ref{EngergyEfficiency7}) are $\bm{W}^{3}$, $\forall i$, we have
\begin{equation}\label{EngergyEfficiency9}
\begin{split}
&\beta F\left(\eta_{1}\right)+\left(1-\beta\right)F\left(\eta_{2}\right)\\
=&\beta \left(f_{1}\left(\bm{W}^{1}\right)-\eta_{1} f_{2}\left(\bm{W}^{1}\right)\right)\\
&+\left(1-\beta\right)\left(f_{1}\left(\bm{W}^{2}\right)-\eta_{2} f_{2}\left(\bm{W}^{2}\right)\right)\\
\geq&\beta \left(f_{1}\left(\bm{W}^{3}\right)-\eta_{1} f_{2}\left(\bm{W}^{3}\right)\right)\\
&+\left(1-\beta\right)\left(f_{1}\left(\bm{W}^{3}\right)-\eta_{2} f_{2}\left(\bm{W}^{3}\right)\right)\\
=&f_{1}\left(\bm{W}^{3}\right)- \left(\beta\eta_{1}+\left(1-\beta\right)\eta_{2}\right) f_{2}\left(\bm{W}^{3}\right)\\
=&F\left(\eta_{3}\right)
\end{split}
\end{equation}
where the inequality follows from the definition of $F\left(\eta\right)$. Based on the above results we see that $F\left(\eta\right)$ is a convex function.

(b) Since $F\left(\eta\right)$ is a convex mapping from $\mathbb{R}$ to $\mathbb{R}$, we can easily see the continuity.

(c) Similarly, the strict decreasing of $F\left(\eta\right)$ can be easily proven.

(d) According to (\ref{EngergyEfficiency6}), we have $F\left(\eta_{1}\right)\leq R_{max}-\eta_{1}KP_{c}$ and $F\left(\eta_{1}\right)\geq -K\left(\eta_{1}P_{c}+P_{j}\right)$. From these two equations, we can see that $\lim\limits_{\eta_{1}\to +\infty}F\left(\eta_{1}\right)=-\infty$ and $\lim\limits_{\eta_{1}\to -\infty}F\left(\eta_{1}\right)=+\infty$. Combining (b) and (c) yields the unique solvability of $F\left(\eta\right)=0$.
\end{proof}
The above theorem means that $F\left(\eta\right)$ is monotonically decreasing and the equation $F\left(\eta\right)=0$ has a unique solution. With these results, the equivalence between problem~(\ref{EngergyEfficiency5}) and (\ref{EngergyEfficiency7}) is given as the following proposition.
\begin{proposition}\label{EEProposition}
The following two statements are equivalent:
\begin{itemize}
\item[(a)] $\max\limits_{\bm{W}\in \mathbb{D}}f\left(\bm{W}\right)=\max\limits_{\bm{W}\in \mathbb{D}}\frac{f_{1}\left(\bm{W}\right)}{f_{2}\left(\bm{W}\right)}=\eta$
\item[(b)] $F\left(\eta\right)=\max\limits_{\bm{W}\in \mathbb{D}}\left\{f_{1}\left(\bm{W}\right)-\eta f_{2}\left(\bm{W}\right)\right\}=0$
\end{itemize}
\end{proposition}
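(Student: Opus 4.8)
\textit{Proof proposal.} The plan is to establish the two implications $(a)\Rightarrow(b)$ and $(b)\Rightarrow(a)$ separately. Two preliminary observations will be used throughout. First, $f_{2}\left(\bm{W}\right)>0$ for every $\bm{W}\in\mathbb{D}$, since by (\ref{EngergyEfficiency6}) we have $f_{2}\left(\bm{W}\right)\geq\sum_{j}\left(M_{j}P_{c}+P_{0}\right)>0$; this is exactly what allows us to pass back and forth between the fractional form $f_{1}/f_{2}$ and the subtractive form $f_{1}-\eta f_{2}$ without reversing inequalities. Second, the maxima appearing in both statements are attained: $\mathbb{D}$ is closed and bounded, hence compact, and $f_{1}$, $f_{2}$ (and therefore $f=f_{1}/f_{2}$ and $f_{1}-\eta f_{2}$) are continuous on $\mathbb{D}$ because each interference-plus-noise term is bounded below by $\sigma_{j,k}^{2}>0$.

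For $(a)\Rightarrow(b)$: assume $\max_{\bm{W}\in\mathbb{D}}f\left(\bm{W}\right)=\eta$, attained at some $\bm{W}^{\star}\in\mathbb{D}$. Then for every $\bm{W}\in\mathbb{D}$ we have $f_{1}\left(\bm{W}\right)/f_{2}\left(\bm{W}\right)\leq\eta$, and multiplying by $f_{2}\left(\bm{W}\right)>0$ gives $f_{1}\left(\bm{W}\right)-\eta f_{2}\left(\bm{W}\right)\leq 0$. Evaluating the defining equality at $\bm{W}^{\star}$ yields $f_{1}\left(\bm{W}^{\star}\right)-\eta f_{2}\left(\bm{W}^{\star}\right)=0$. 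Hence the maximum in (\ref{EngergyEfficiency8}) equals $0$ and is attained at $\bm{W}^{\star}$, i.e. $F\left(\eta\right)=0$.

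For $(b)\Rightarrow(a)$: assume $F\left(\eta\right)=0$, attained at some $\bm{W}^{\star}\in\mathbb{D}$. Then for every $\bm{W}\in\mathbb{D}$ we have $f_{1}\left(\bm{W}\right)-\eta f_{2}\left(\bm{W}\right)\leq 0$, and dividing by $f_{2}\left(\bm{W}\right)>0$ gives $f\left(\bm{W}\right)\leq\eta$; while at $\bm{W}^{\star}$ the equality $f_{1}\left(\bm{W}^{\star}\right)-\eta f_{2}\left(\bm{W}^{\star}\right)=0$ gives $f\left(\bm{W}^{\star}\right)=\eta$. Therefore $\max_{\bm{W}\in\mathbb{D}}f\left(\bm{W}\right)=\eta$, which is statement $(a)$.

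There is no deep difficulty in this argument; the only points that require care are the attainment of the maxima (so that the optimal value $\eta$ is realized, not merely approached) and the strict positivity of $f_{2}$, which is precisely where the lower bound in (\ref{EngergyEfficiency6}) enters. Finally, combining the two implications with Theorem~\ref{EETheorem1}(d), which guarantees that $F\left(\eta\right)=0$ has a unique root, shows that this unique root is exactly the optimal energy efficiency of (\ref{EngergyEfficiency5}); this is what justifies the outer-layer one-dimensional search over $\eta$ in the proposed two-layer scheme.
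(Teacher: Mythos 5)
Your proof is correct and follows essentially the same route as the paper's: both directions are handled by multiplying or dividing the optimality inequality by $f_{2}\left(\bm{W}\right)>0$ (justified via the lower bound in (\ref{EngergyEfficiency6})) to pass between the fractional and subtractive forms, with the optimizer of one problem shown to be the optimizer of the other. Your added remarks on compactness of $\mathbb{D}$ and attainment of the maxima make explicit a point the paper leaves implicit, but they do not change the argument.
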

\begin{proof}
First proving (a)$\Longrightarrow$(b). Let $\bm{W}^{opt}$ be the solution of the problem (\ref{EngergyEfficiency5}), for $\forall \bm{W}\in \mathbb{D}$, we have
\begin{equation}\label{EngergyEfficiency27}
\eta=f\left(\bm{W}^{opt}\right)=\frac{f_{1}\left(\bm{W}^{opt}\right)}{f_{2}\left(\bm{W}^{Opt}\right)}
\geq\frac{f_{1}\left(\bm{W}\right)}{f_{2}\left(\bm{W}\right)}
\end{equation}
According to (\ref{EngergyEfficiency6}), we easily know that $f_{2}\left(\bm{W}\right)> 0, \forall \bm{W}\in \mathbb{D}$ and then have the following equations
\begin{equation}\label{EngergyEfficiency28}
\begin{split}
&f_{1}\left(\bm{W}^{opt}\right)-\eta f_{2}\left(\bm{W}^{opt}\right)=0\\
&f_{1}\left(\bm{W}\right)-\eta f_{2}\left(\bm{W}\right)\leq 0
\end{split}
\end{equation}
Based on the above results, we can easily know that $\max\limits_{\bm{W}\in \mathbb{D}}\left\{f_{1}\left(\bm{W}\right)-\eta f_{2}\left(\bm{W}\right)\right\}=0$ and the maximum value is obtained at $\bm{W}^{opt}$.

Next proving (b)$\Longrightarrow$(a). Let $\bm{W}^{opt}$ be the solution of the problem (\ref{EngergyEfficiency8}), for $\forall \bm{W}\in \mathbb{D}$, then we have
\begin{equation}\label{EngergyEfficiency29}
\begin{split}
0=F\left(\eta\right)&=f_{1}\left(\bm{W}^{opt}\right)-\eta f_{2}\left(\bm{W}^{opt}\right)\\
&\geq f_{1}\left(\bm{W}\right)-\eta f_{2}\left(\bm{W}\right)
\end{split}
\end{equation}
From~(\ref{EngergyEfficiency6}) and~(\ref{EngergyEfficiency29}), we easily know that $f_{2}\left(\bm{W}\right)> 0, \forall \bm{W}\in \mathbb{D}$ and then have the following relations
\begin{equation}\label{EngergyEfficiency30}
\begin{split}
&\frac{f_{1}\left(\bm{W}^{opt}\right)}{ f_{2}\left(\bm{W}^{opt}\right)}=\eta\\
&\frac{f_{1}\left(\bm{W}\right)}{f_{2}\left(\bm{W}\right)}\leq \eta
\end{split}
\end{equation}
Based on that, we can easily know that $F\left(\eta\right)=\max\limits_{\bm{W}\in \mathbb{D}}\frac{f_{1}\left(\bm{W}\right)}{f_{2}\left(\bm{W}\right)}=\eta$ and the maximum value is obtained at $\bm{W}^{opt}$.
\end{proof}

The above proposition means that the univariate equation $F\left(\eta\right)=0$ is essentially equivalent to the primal fractional programming problem (\ref{EngergyEfficiency5}). In other words, if we can find a parameter $\eta$ such that the optimal value of problem (\ref{EngergyEfficiency8}) is zero, then the optimal solution of problem (\ref{EngergyEfficiency8}) is also the optimal solution of problem~(\ref{EngergyEfficiency5}). Henceforth the parameter $\eta$ is named as the energy efficiency factor of our considered communication systems. According to~(\ref{EngergyEfficiency6}), we have $0\leq\eta \leq \frac{R_{max}}{\sum\limits_{j}\left(M_{j}P_{c}+P_{0}\right)}$\footnote{It is seen that the value of circuit power, i.e., $P_c$ and $P_0$ affects the range of $\eta$.}. Combining Theorem~\ref{EETheorem1} with proposition~\ref{EEProposition}, problem~(\ref{EngergyEfficiency5}) can be solved by sequentially looking for the optimal univariate parameter $\eta$. Recalling the properties summarized in Theorem 1, it is easy to understand that one dimension search method is efficient to find the solution of $F\left(\eta\right)=0$, here we would like to employ the bi-section method~\cite{BookBoyd2004}. The corresponding iterative algorithm is summarized as Algorithm \ref{OuterLayer}.

\begin{algorithm}
\caption{Outer Layer Solution}\label{OuterLayer}
\begin{algorithmic}[1]
\STATE Initialize $\eta_{min}=0$, and $\eta_{max}=\frac{R_{max}}{\sum\limits_{j}\left(M_{j}P_{c}+P_{0}\right)}$.\label{li:Initial}
\STATE Let $\eta=\frac{\eta_{min}+\eta_{max}}{2}$, then solve problem (\ref{EngergyEfficiency8}) for given $\eta$, obtain the optimal solution $\{\bm{W}^{opt}\}$ and $F\left(\eta\right)$.\label{li:Sub-Problem}
\STATE If $F\left(\eta\right)\leq 0$, let $\eta_{max}=\eta$. Otherwise, let $\eta_{min}=\eta$.
\STATE if $\left|\eta_{max}-\eta_{min}\right|\leq\varepsilon$, where $\varepsilon$ is a predefined threshold, then stop. Otherwise, return to step \ref{li:Sub-Problem}.\label{li:Output}
\end{algorithmic}
\end{algorithm}

\subsection*{B. Solution of Sub-Problem}

It is easily seen that the key step in Algorithm~\ref{OuterLayer} lies in solving the sub-problem (\ref{EngergyEfficiency8}) to achieve the beamformers. Without loss of generality, we assume that the value of $\eta$ is greater than zero, and rewrite the sub-problem (\ref{EngergyEfficiency8}) into the following equivalent form for a given $\eta$
\begin{equation}\label{EngergyEfficiency13}
\begin{split}
&\max_{\bm{W}}~G\left(\bm{W}\right)=\sum\limits_{j,k}\left(\alpha_{j,k}R_{j,k}-\eta\xi\|\bm{w}_{j,k}\|^{2}\right)\\
~s.t.~& \sum_{k=1}^{N_{j}}\|\bm{w}_{j,k}\|^{2}\leq P_{j},\forall j.
\end{split}
\end{equation}
Although the problem (\ref{EngergyEfficiency13}) has a non-fractional form, it is still non-convex and its optimization variables are coupled. Next we further reformulate it into a tractable form by exploiting the relationship between the achievable rate and the MSE of the optimal receiver. We consider a linear receiver filter where the estimated signal is calculated as $\tilde{x}_{j,k}=\mu_{j,k}^{*}y_{i,k}$, with $\mu_{j,k}$ denoting the receiver filter at User-$\left(j,k\right)$. Then, the MSE $e_{j,k}$ for User-$\left(j,k\right)$ is calculated as
\begin{equation}\label{EngergyEfficiency14}
\begin{split}
e_{j,k}=&\mathbb{E}\left\{\left(\tilde{x}_{j,k}-x_{j,k}\right)\left(\tilde{x}_{j,k}-x_{j,k}\right)^{*}\right\}\\
=&\left|\mu_{j,k}\right|^{2}\left(\sum_{m,n}\left|\bm{h}_{m,j,k}^{H}
\bm{w}_{m,n}\right|^{2}+\sigma_{j,k}^{2}\right)\\
&-\mu_{j,k}\bm{w}_{j,k}^{H}\bm{h}_{j,j,k}-\mu_{j,k}^{*}\bm{h}_{j,j,k}^{H}\bm{w}_{j,k}+1.
\end{split}
\end{equation}
Let $\hat{e}_{j,k}$ denote the MSE achieved by the optimal receiver filter, then the user rate $R_{j,k}$ can be expressed as $R_{j,k}=\log\left(\frac{1}{\hat{e}_{j,k}}\right)$~\cite{TWCChristensen2008}. By introducing two sets of auxiliary variables $\bm{S}=\left\{\bm{s}_{1},\cdots,\bm{s}_{K}\right\}$, $\bm{s}_{j}=\left\{s_{j,1},\cdots,s_{j,N_{j}}\right\}$, $\forall j$, problem (\ref{EngergyEfficiency13}) is reformulated as follows by using Lemma~2 in~\cite{ICCJose2011}
\begin{equation}\label{EngergyEfficiency17}
\begin{split}
&\max_{\bm{W},\bm{U},\bm{S}}~\mathcal{H}\left(\bm{W},\bm{U},\bm{S}\right)\\
~s.t.~&\sum_{k=1}^{N_{j}}\|\bm{w}_{j,k}\|^{2}\leq P_{j},\forall j
\end{split}
\end{equation}
where the function $\mathcal{H}\left(\bm{W},\bm{U},\bm{S}\right)$ is defined as
\begin{equation}\label{EngergyEfficiency31}
\mathcal{H}\left(\bm{W},\bm{U},\bm{S}\right)=
\sum\limits_{j,k}\left(-\alpha_{j,k}e_{j,k}s_{j,k}+\alpha_{j,k}\log s_{j,k}
+\alpha_{j,k}-\eta\xi\|\bm{w}_{j,k}\|^{2}\right)
\end{equation}
and $\bm{U}=\left\{\bm{\mu}_{1},\cdots,\bm{\mu}_{K}\right\}$ with $\bm{\mu}_{j}=\left\{\mu_{j,1},\cdots,\mu_{j,N_{j}}\right\}$ denotes the receiver filters. One can see that compared with the primal problem,  the objective function of the equivalent problem~(\ref{EngergyEfficiency17}) has a more tractable form while introduces a few extra optimization variables. Combining~(\ref{EngergyEfficiency14}) and~(\ref{EngergyEfficiency31}), it is easily seen that the cost function $\mathcal{H}\left(\bm{W},\bm{U},\bm{S}\right)$ is convex in each of the optimization variables\footnote{Note that on the right hand side of (\ref{EngergyEfficiency31}) $e_{j,k}$ is a function of $\bm{U}$, as shown in (\ref{EngergyEfficiency14})} $\bm{W}$, $\bm{U}$, $\bm{S}$. In what follows, we propose to use the block coordinate descent method to solve problem~(\ref{EngergyEfficiency17}). Specifically, we maximize the cost function $\mathcal{H}\left(\bm{W},\bm{U},\bm{S}\right)$ by sequentially fixing two of the three variables $\bm{W}$, $\bm{U}$, $\bm{S}$ and updating the third. The optimal receiver filters $\bm{U}$ and the optimal auxiliary variables $\bm{S}$ are given by the following theorem for a given $\bm{W}$, respectively.
\begin{theorem}\label{EETheorem2}
For any given $\bm{W}$, the optimal receiver filters of the sub-problem (\ref{EngergyEfficiency17})
are given by
\begin{equation}\label{EngergyEfficiency18}
u_{j,k}^{opt}=\frac{\bm{h}_{j,j,k}^{H}\bm{w}_{j,k}}{\sum\limits_{m,n}\left|\bm{h}_{m,j,k}^{H}\bm{w}_{m,n}\right|^{2}+\sigma_{j,k}^{2}}, \forall j,k
\end{equation}
Furthermore, the optimal $s_{j,k}$ is given by
\begin{equation}\label{EngergyEfficiency19}
s_{j,k}^{opt}=\frac{1}{\hat{e}_{j,k}}, \forall j,k
\end{equation}
where $\hat{e}_{j,k}$ is given by
\begin{equation}\label{EngergyEfficiency16}
\hat{e}_{j,k}=1-\frac{\left|\bm{h}_{j,j,k}^{H}\bm{w}_{j,k}\right|^{2}}
{\sum\limits_{m,n}\left|\bm{h}_{m,j,k}^{H}\bm{w}_{m,n}\right|^{2}+\sigma_{j,k}^{2}}.
\end{equation}
\end{theorem}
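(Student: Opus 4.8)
The plan is to prove Theorem~\ref{EETheorem2} by exploiting the convexity (in each variable separately) of $\mathcal{H}\left(\bm{W},\bm{U},\bm{S}\right)$ that was noted just before the statement: since the receiver filters $\bm{U}$ and the auxiliary variables $\bm{S}$ are otherwise unconstrained, the blockwise optimum is found simply by setting the relevant gradient to zero. First I would fix $\bm{W}$ and $\bm{S}$ and optimize over $\bm{U}$. From~(\ref{EngergyEfficiency31}), only the term $-\alpha_{j,k}e_{j,k}s_{j,k}$ depends on $\mu_{j,k}$, and the MSE expression~(\ref{EngergyEfficiency14}) is a quadratic in $\mu_{j,k}$ with positive leading coefficient $\sum_{m,n}\left|\bm{h}_{m,j,k}^{H}\bm{w}_{m,n}\right|^{2}+\sigma_{j,k}^{2}$. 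Treating $\mu_{j,k}$ and $\mu_{j,k}^{*}$ as independent (Wirtinger calculus), $\partial e_{j,k}/\partial \mu_{j,k}^{*}=0$ yields immediately the MMSE filter $u_{j,k}^{opt}$ in~(\ref{EngergyEfficiency18}); because each $e_{j,k}$ involves only its own $\mu_{j,k}$, the problem decouples across $(j,k)$.

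Next I would substitute $u_{j,k}^{opt}$ back into~(\ref{EngergyEfficiency14}) to obtain the minimum MSE $\hat{e}_{j,k}$. This is a routine algebraic simplification: plugging in the filter, the cross terms and the quadratic term combine so that the $\sigma_{j,k}^{2}$-weighted expression collapses to $1-\left|\bm{h}_{j,j,k}^{H}\bm{w}_{j,k}\right|^{2}/\bigl(\sum_{m,n}\left|\bm{h}_{m,j,k}^{H}\bm{w}_{m,n}\right|^{2}+\sigma_{j,k}^{2}\bigr)$, which is exactly~(\ref{EngergyEfficiency16}). One should also record that $0<\hat e_{j,k}\le 1$ so that $\log \hat e_{j,k}$ is well defined and the connection $R_{j,k}=\log(1/\hat e_{j,k})$ with~(\ref{EngergyEfficiency2}) holds.

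Finally I would fix $\bm{W}$ and $\bm{U}$ (at $u^{opt}$, so $e_{j,k}=\hat e_{j,k}$) and optimize over $\bm{S}$. For each $(j,k)$ the relevant part of $\mathcal{H}$ is $g(s_{j,k})=-\alpha_{j,k}\hat e_{j,k}\,s_{j,k}+\alpha_{j,k}\log s_{j,k}$, which is concave in $s_{j,k}>0$; setting $g'(s_{j,k})=-\alpha_{j,k}\hat e_{j,k}+\alpha_{j,k}/s_{j,k}=0$ gives $s_{j,k}^{opt}=1/\hat e_{j,k}$, i.e.~(\ref{EngergyEfficiency19}), and one checks this is a maximizer since $g''<0$. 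The main obstacle is not conceptual — each step is a scalar stationarity computation — but rather the bookkeeping in the second step: carefully carrying the complex conjugates through the quadratic-form substitution so that the cancellation producing the clean form~(\ref{EngergyEfficiency16}) is transparent, and being explicit that the per-user decoupling is what legitimizes optimizing each $\mu_{j,k}$ and each $s_{j,k}$ in isolation.
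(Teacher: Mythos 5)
Your proposal is correct and follows essentially the same route as the paper: the paper likewise substitutes $e_{j,k}$ into $\mathcal{H}$ and applies the first-order conditions $\partial\mathcal{H}/\partial u_{j,k}^{*}=0$ and $\partial\mathcal{H}/\partial s_{j,k}=0$, then back-substitutes the MMSE filter to obtain $\hat{e}_{j,k}$. Your version merely adds the (correct) supporting details on per-user decoupling, concavity, and $0<\hat{e}_{j,k}\leq 1$ that the paper leaves implicit.
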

\begin{proof}
For any given $\bm{W}$, first substitute the expression of $e_{j,k}$ into $\mathcal{H}\left(\bm{W},\bm{U},\bm{S}\right)$, then check the first optimality condition to find the optimal receiver filter $u_{j,k}$ and the optimal auxiliary variable $s_{j,k}$, i.e., $\frac{\partial\mathcal{H}\left(\bm{W},\bm{U},\bm{S}\right)}{\partial u_{j,k}^{*}}=0$ and $\frac{\partial\mathcal{H}\left(\bm{W},\bm{U},\bm{S}\right)}{\partial s_{j,k}}=0$, respectively. It follows that the optimal receiver filter is given as (\ref{EngergyEfficiency18}). By replacing the optimal receiver given in (\ref{EngergyEfficiency18}) into $s_{j,k}^{opt}=\frac{1}{e_{j,k}}$,  the result in (\ref{EngergyEfficiency16}) is obtained.
\end{proof}

Once the values of $\bm{U}$ and $\bm{S}$ are given, the optimization of $\bm{W}$ is decoupled among the BSs by substituting the expression of $e_{j,k}$ into $\mathcal{H}\left(\bm{W},\bm{U},\bm{S}\right)$, leading to the following parallel optimization problems given by
\begin{equation}\label{EngergyEfficiency20}
\begin{split}
\max_{\bm{W}_{j}}&-\sum_{k=1}^{N_{j}}\sum_{m,n}\alpha_{m,n}s_{m,n}\left|\mu_{m,n}\right|^{2}\left|\bm{h}_{j,m,n}^{H}
\bm{w}_{j,k}\right|^{2}\\
&+\sum_{k=1}^{N_{j}}\left(\alpha_{j,k}s_{j,k}\mu_{j,k}\bm{w}_{j,k}^{H}\bm{h}_{j,j,k}
-\eta\xi\|\bm{w}_{j,k}\|^{2}\right)\\
&+\sum_{k=1}^{N_{j}}\alpha_{j,k}s_{j,k}\mu_{j,k}^{*}\bm{h}_{j,j,k}^{H}\bm{w}_{j,k}\\
s.t.~&~ \sum_{k=1}^{N_{j}}\|\bm{w}_{j,k}\|^{2}\leq P_{j}.
\end{split}
\end{equation}
It is easily known that the above problem is a convex quadratic optimization problem which can be solved using standard approaches such as classical interior point method or second order conic programming (SOCP)~\cite{BookBoyd2004}. More importantly, we reveal that its solution has a closed-form expression using the Lagrange multiplier method.  We proceed by introducing a Lagrange multiplier $\lambda_{j}$ associated with the power constraint of BS-$j$, the corresponding Lagrange function can be written as
\begin{equation}\label{EngergyEfficiency21}
\begin{split}
\mathcal{L}\left(\bm{W}_{j},\lambda_{j}\right)=&-\sum_{k=1}^{N_{j}}\left(\sum_{m,n}
\alpha_{m,n}s_{m,n}\left|\mu_{m,n}\right|^{2}\left|\bm{h}_{j,m,n}^{H}\bm{w}_{j,k}\right|^{2}
-\alpha_{j,k}s_{j,k}\mu_{j,k}\bm{w}_{j,k}^{H}\bm{h}_{j,j,k}\right)\\
&+\sum_{k=1}^{N_{j}}\left(\alpha_{j,k}s_{j,k}\mu_{j,k}^{*}\bm{h}_{j,j,k}^{H}\bm{w}_{j,k}
-\eta\xi\|\bm{w}_{j,k}\|^{2}\right)-\lambda_{j}\left(\sum_{k=1}^{N_{j}}\|\bm{w}_{j,k}\|^{2}-P_{j}\right)
\end{split}
\end{equation}
The first-order optimality condition of $\mathcal{L}\left(\bm{W}_{j},\lambda_{j}\right)$ with respect to $\bm{w}_{j,k}^{H}$ yields
\begin{equation}\label{EngergyEfficiency22}
\bm{w}_{j,k}=\alpha_{j,k}s_{j,k}u_{j,k}\left(\bm{A}_{j}+\lambda_{j}\bm{I}\right)^{\dag}\bm{h}_{j,j,k}
\end{equation}
where $\bm{A}_{j}=\sum\limits_{m=1}^{K}\sum\limits_{n=1}^{N_{m}}\alpha_{m,n}s_{m,n}\left|u_{m,n}\right|^{2}
\bm{h}_{j,m,n}\bm{h}_{j,m,n}^{H}+\eta\xi\bm{I}$. Different from traditional beamforming design~\cite{TSPShi2011}, it is seen that the energy efficiency factor $\eta$ is included in $\bm{A}_{j}$, and $\lambda_{j}\geq 0$ should be chosen such that the complementary slackness condition of the power constraint is satisfied. For notational simplicity, we introduce a parametric representation for $\bm{w}_{j,k}$ and let $\bm{w}_{j,k}\left(\lambda_{j}\right)$ denote the right-hand side of (\ref{EngergyEfficiency22}) with parameter $\lambda_{j}$. Since $\eta> 0$ and $\xi> 0$, $\bm{A}_{j}$ is a positive-definite matrix, and $\bm{A}_{j}+\lambda_{j}\bm{I}$ is also a positive-definite matrix. Without loss of generality, we denote the eigendecomposition of matrix $\bm{A}_{i}$ as $\bm{\Phi}_{j}\bm{\Lambda}_{j}\bm{\Phi}_{j}^{H}$ and have
\begin{equation}\label{EngergyEfficiency24}
\left\|\bm{w}_{j,k}\left(\lambda_{i}\right)\right\|^{2}=
Tr\left(\left(\bm{\Lambda}_{j}+\lambda_{j}\bm{I}\right)^{-2}\bm{\Phi}_{j}^{H}\bm{\Psi}_{j,k}\bm{\Phi}_{j}\right)
\end{equation}
where $\bm{\Psi}_{j,k}=\left|\alpha_{j,k}s_{j,k}u_{j,k}\right|^{2}\bm{h}_{j,j,k}\bm{h}_{j,j,k}^{H}$. Let $\bm{\Psi}_{j}=\bm{\Phi}_{j}^{H}\left(\sum_{k=1}^{N_{j}}\bm{\Psi}_{j,k}\right)\bm{\Phi}_{j}$, then we have
\begin{equation}\label{EngergyEfficiency25}
\varphi\left(\lambda_{j}\right)=\sum_{k=1}^{N_{j}}\left\|\bm{w}_{j,k}\left(\lambda_{j}\right)\right\|^{2}
=\sum_{m=1}^{M_{j}}\frac{\left[\bm{\Psi}_{j}\right]_{m,m}}
{\left(\left[\bm{\Lambda}_{j}\right]_{m,m}+\lambda_{j}\right)^{2}}.
\end{equation}
It is easily observed that the function $\varphi\left(\lambda_{j}\right)$ is monotonically decreasing in $\lambda_{j}$ for $\lambda_{j}\geq 0$. If $\sum\limits_{k=1}^{N_{j}}\left\|\bm{w}_{j,k}\left(0\right)\right\|^{2}\leq P_{j}$, then $\bm{w}_{i,k}^{opt}=\bm{w}_{j,k}\left(0\right)$, $\forall k$, otherwise, we must have
\begin{equation}\label{EngergyEfficiency23}
\varphi\left(\lambda_{j}\right)=\sum_{k=1}^{N_{j}}\left\|\bm{w}_{j,k}\left(\lambda_{j}\right)\right\|^{2}=P_{j}.
\end{equation}
According to the monotonic property of the function $\varphi\left(\lambda_{j}\right)$ with respect to $\lambda_{j}$, the equation (\ref{EngergyEfficiency23}) can be solved by one dimension search method, such as the bi-section method~\cite{BookBoyd2004}. Once the optimal $\lambda_{j}^{opt}$ is obtained, the optimal beamformer $\bm{w}_{j,k}^{opt}$ can also be calculated by (\ref{EngergyEfficiency22}), i.e.,
\begin{equation}\label{EngergyEfficiency26}
\bm{w}_{j,k}^{opt}=\alpha_{j,k}s_{j,k}u_{j,k}\left(\bm{A}_{j}+\lambda_{j}^{opt}\bm{I}\right)^{\dag}\bm{h}_{j,j,k}.
\end{equation}
Based on the above analysis,  the alternating optimization strategy can now be used to reach the solution of sub-problem (\ref{EngergyEfficiency17}), summarized as the following Algorithm~\ref{SubProblemSolution}.
\begin{algorithm}[H]
\caption{Sub-Problem Solution}\label{SubProblemSolution}
\begin{algorithmic}[1]
\STATE Set $n=0$, initialize $\bm{w}_{j,k}^{(n)}$ such that $\sum\limits_{k=1}^{N_{j}}\left\|\bm{w}_{j,k}^{(n)}\right\|^{2}\leq P_{j}$, $u_{j,k}^{(n)}=0$, $s_{j,k}^{(n)}=0$, $\forall j,k$, and compute $G\left(\bm{W}^{(n)}\right)=0$. \label{Initial}
\STATE Let $n=n+1$, update $u_{j,k}$ with (\ref{EngergyEfficiency18}), and obtain $u_{j,k}^{(n)}$, $\forall j,k$.\label{UpdatedU}
\STATE Update $s_{j,k}$ with (\ref{EngergyEfficiency19}) and (\ref{EngergyEfficiency16}), and obtain $s_{j,k}^{(n)}$ and $\hat{e}_{j,k}^{(n)}$, $\forall j,k$.\label{UpdatedS}
\STATE Update $\bm{w}_{j,k}$ with (\ref{EngergyEfficiency26}), and obtain $\bm{w}_{j,k}^{(n)}$, $\forall j,k$. \label{UpdatedW}
\STATE If $\left|G\left(\bm{W}^{(n)}\right)-G\left(\bm{W}^{(n-1)}\right)\right|\leq\delta$, where $\delta$ is a predefined threshold, and stop the algorithm. Otherwise return to step \ref{UpdatedU}.\label{OutputResult}
\end{algorithmic}
\end{algorithm}
\begin{theorem}\label{EETheorem3}
Algorithm \ref{SubProblemSolution} is guaranteed to converge.
\end{theorem}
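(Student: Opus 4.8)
The plan is to establish convergence by the classical block-coordinate-ascent argument: show that the sequence of objective values $G(\bm{W}^{(n)})$ generated by Algorithm~\ref{SubProblemSolution} is monotonically non-decreasing and bounded above, so it converges as a bounded monotone sequence of reals, which in turn forces $|G(\bm{W}^{(n)})-G(\bm{W}^{(n-1)})|\to 0$ and hence the stopping test in Step~\ref{OutputResult} to be met in finitely many iterations.

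First I would record the key identity behind the reformulation~(\ref{EngergyEfficiency17}): for any fixed feasible $\bm{W}$, maximizing $\mathcal{H}(\bm{W},\bm{U},\bm{S})$ jointly over $(\bm{U},\bm{S})$ returns exactly $G(\bm{W})$, with the maximizer given in closed form by Theorem~\ref{EETheorem2}. Indeed, $\mathcal{H}$ is concave in each $s_{j,k}$ with unconstrained maximizer $s_{j,k}=1/e_{j,k}$; the MMSE receiver of~(\ref{EngergyEfficiency18}) minimizes each $e_{j,k}$ down to $\hat{e}_{j,k}$; and substituting these back, together with $R_{j,k}=\log(1/\hat{e}_{j,k})$, collapses $\mathcal{H}$ to $\sum_{j,k}\left(\alpha_{j,k}R_{j,k}-\eta\xi\|\bm{w}_{j,k}\|^{2}\right)=G(\bm{W})$.

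Next I would track $\mathcal{H}$ across the three block updates of one iteration. Steps~\ref{UpdatedU} and~\ref{UpdatedS} replace $(\bm{U},\bm{S})$ by their joint maximizer given $\bm{W}^{(n-1)}$, so $\mathcal{H}$ does not decrease and in fact attains the value $G(\bm{W}^{(n-1)})$; Step~\ref{UpdatedW} replaces $\bm{W}$ by the global maximizer of $\mathcal{H}(\cdot,\bm{U}^{(n)},\bm{S}^{(n)})$ over the per-BS power region — a convex quadratically constrained problem~(\ref{EngergyEfficiency20}) solved exactly through the closed form~(\ref{EngergyEfficiency26}) and the monotone one-dimensional search for $\lambda_{j}$ — so $\mathcal{H}$ again does not decrease and $\bm{W}^{(n)}$ remains feasible. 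Chaining these inequalities and invoking the identity above at the start of iteration $n+1$ gives
\begin{equation*}
G(\bm{W}^{(n-1)})=\mathcal{H}(\bm{W}^{(n-1)},\bm{U}^{(n)},\bm{S}^{(n)})\le\mathcal{H}(\bm{W}^{(n)},\bm{U}^{(n)},\bm{S}^{(n)})\le\mathcal{H}(\bm{W}^{(n)},\bm{U}^{(n+1)},\bm{S}^{(n+1)})=G(\bm{W}^{(n)}),
\end{equation*}
so $\{G(\bm{W}^{(n)})\}$ is non-decreasing. For the upper bound I would use~(\ref{EngergyEfficiency6}): since $\eta\xi\|\bm{w}_{j,k}\|^{2}\ge 0$, we have $G(\bm{W}^{(n)})\le f_{1}(\bm{W}^{(n)})\le R_{max}$. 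A non-decreasing sequence bounded above converges, which completes the argument.

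The main obstacle — and really the only non-routine step — is the identity in the second paragraph: making precise that the two separate closed-form updates of $\bm{U}$ and $\bm{S}$ together constitute an exact maximization of $\mathcal{H}$ over $(\bm{U},\bm{S})$ for fixed $\bm{W}$, and that the resulting optimal value equals $G(\bm{W})$. Once that is in place, the monotonicity chain follows from exactness of each of the three block updates and the boundedness is immediate from~(\ref{EngergyEfficiency6}). I would also note explicitly that this argument proves convergence of the objective sequence, not — absent extra assumptions such as uniqueness of the block maximizers — convergence of the iterates $\bm{W}^{(n)}$ themselves, which is all the theorem claims.
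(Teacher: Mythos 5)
Your proof is correct and takes essentially the same route as the paper's: both argue that each of the three block updates (receiver filters, auxiliary variables, beamformers) exactly maximizes $\mathcal{H}$ over its own block and hence cannot decrease it, so $G\left(\bm{W}^{(n)}\right)$ is monotonically non-decreasing, and then combine this with boundedness of the objective to conclude convergence of the value sequence. Your write-up merely fills in details the paper leaves implicit --- notably the identity $\max_{\bm{U},\bm{S}}\mathcal{H}\left(\bm{W},\bm{U},\bm{S}\right)=G\left(\bm{W}\right)$ and the explicit upper bound via $R_{max}$ --- and correctly observes that only convergence of the objective values, not of the iterates, is being claimed.
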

\begin{proof}
Since the updates at step \ref{UpdatedU}, step \ref{UpdatedS}, and step \ref{UpdatedW} all  maximize the target object $\mathcal{H}\left(\bm{W},\bm{U},\bm{S}\right)$ at each iteration, i.e., maximize the target object $G\left(\bm{W}\right)$, the iterations in Algorithm~\ref{SubProblemSolution} lead to monotone increase of the object function (\ref{EngergyEfficiency17}). Since the achievable rate region under the practical per-BS power constraints is
bounded, i.e., the objective function is bounded, this monotonicity guarantees the convergence of the algorithm~\cite{Bibby1974}.
\end{proof}

By combining Algorithm~\ref{OuterLayer} and Algorithm~\ref{SubProblemSolution}, the energy efficient optimization problem (\ref{EngergyEfficiency5}) can now be efficiently solved. More importantly, we also derive closed-form expressions for the optimal beamformers, the optimal receiver filters and the auxiliary variables, which provide some insight on the energy-efficient optimization.

\begin{remark}
\rm It is interesting to note that our developed energy efficient optimization algorithm can be easily extended to multicell multiuser MIMO downlink system by simply reformulating the MMSE expression, the corresponding MMSE receiver filters, the corresponding auxiliary variable expressions, and the transmit beamforming matrices. Also, note that in this paper we only aim to maximize the energy efficiency without considering the rate requirements of individual cells or individual users. In some scenarios, we may need to satisfy additional rate constraints while maximizing the energy efficiency. Furthermore, our developed ideas can be easily applied to this case by adding the additional rate constraints in~(\ref{EngergyEfficiency5}), which we would like to leave as future work.
\end{remark}

\section*{\sc \uppercase\expandafter{\romannumeral4}.Computational Complexity  and Implementation Analysis}

In this section, we discuss the implementation issue of our proposed algorithm for  problem (\ref{EngergyEfficiency5}).
The computational complexity of our algorithm is first analyzed using the real floating point operation method. Besides, it is also shown that our algorithm can be carried out in a decentralized or parallel manner.

\subsection*{A. Complexity Analysis}

In what follows, the computational complexity is measured by counting the number of flops defined as real floating point operation~\cite{BookGolub1996}. That is, a real addition, multiplication, or division is counted as one flop, while a complex addition and multiplication have two flops and six flops, respectively. It is easy to see that the major computation of the proposed energy efficient beamforming algorithm lies in the execution of Algorithm 2. Let $N=\sum\limits_{j=1}^{K}N_{j}$ denote the total number of users in $K$ cells and $L=\sum\limits_{j=1}^{K}N_{j}M_{j}$, the computational complexity involved in Algorithm~\ref{SubProblemSolution} is counted as
\begin{itemize}
\item Updating the receiver filter at step \ref{UpdatedU} involves about $\phi_{1}=9NL$ flops.
\item Step \ref{UpdatedS} needs $\phi_{2}=8\left(N+2\right)L+3N$ flops.
\item In Step \ref{UpdatedW}, computing matrix  $\bm{A}_{j}$ requires $\left(N+1\right)M_{j}+8M_{j}^{2}+12N$ flops. As $\bm{A}_{j}$ is a positive definite matrix, Cholesky decomposition can be used to simplify the following inversion step. Thus, the flop count for $\left(\bm{A}_{j}+\lambda_{j}\right)^{-1}$ is about $\frac{8}{3}M_{j}^{3}+7M_{j}^{2}+\frac{7}{6}M_{j}$ and eigenvalue decomposition needs about $126M_{j}^{3}$ flops~\cite{TComZhang2008}. Since the bisection step generally takes a few iterations, here we ignore its effect in the complexity analysis. Thus, the flop count for step \ref{UpdatedW} is about $\phi_{3}=\sum\limits_{j=1}^{K}\left(129M_{j}^{3}+\left(15+8N_{j}\right)M_{j}^{2}+(N+2)M_{j}\right)+\left(12K+8\right)N$.
\end{itemize}
Therefore, finding the solution of the proposed method for problem~(\ref{EngergyEfficiency5}) takes about $\varrho_{1}\varrho_{2}\sum\limits_{i=1}^{3}\phi_{i}$ flops in total, where $\varrho_{1}$ and $\varrho_{2}$ denote the number of iterations for Algorithm \ref{OuterLayer} and Algorithm~\ref{SubProblemSolution}, respectively.

It is useful to compare the complexity of the proposed algorithm with some baseline algorithms such as the sum rate maximization beamforming algorithm proposed in~\cite{TSPShi2011}, which is based on iterative minimization of weighted MMSE (WMMSE). One can see that for the execution of one iteration, the WMMSE algorithm has the same order of computational complexity as our proposed Algorithm~\ref{SubProblemSolution}. That is, the execution of the WMMSE algorithm approximately takes $\varrho_{3}\sum\limits_{i=1}^{3}\phi_{i}$ flops, where $\varrho_{3}$ is the number of iterations needed to reach the convergence condition. Though in most cases the proposed algorithm may require a greater number of iterations than the WMMSE algorithm, due to the fact that in our algorithm the additional energy efficiency factor $\eta$ needs to be updated. However, owing to the property summarized in Theorem 1, the search for the optimal $\eta$ generally takes a few iterations, therefore it will not significantly increase the complexity.

\subsection*{B. Parallel Implementation}

From the steps of Algorithm \ref{SubProblemSolution}, we can find that the beamforming vectors can be optimized in parallel. In other words, if all channel state information (CSI) is collected at a central controller which has $K$ parallel processors that can exchange information with each other\footnote{The $K$ parallel processors can be the $K$ BSs if they are linked with capacity sufficient backhaul. In some practical scenarios there may not exist a central controller, in this case one of the BSs needs to be assigned as the controller and leads the algorithm implementation.}, the beamforming vectors can be simultaneously updated by $K$ different parallel processors. Thus, the developed algorithm can be implemented in a parallel compute fashion with a similar method used in~\cite{Bogale201201}. The detail steps are described as follows.
\begin{enumerate}
\item As the initialization, the central controller collects all CSI and share them among $K$ parallel processors. Let $n=0$, the central controller distributes an initial $\eta^{(n)}$ to all $K$ parallel processors.\label{ParallelStep01}
\item Each parallel processor $j$ optimizes its beamforming vector $\bm{W}_{j}^{(n+1)}$, the corresponding user's receiver filter $\bm{\mu}_{j}^{(n+1)}$, the auxiliary variable $\bm{s}_{j}^{(n+1)}$ according to Algorithm~\ref{SubProblemSolution} for the fixed $\eta^{(n)}$. \label{ParallelStep02}
\item Let $n=n+1$, the central controller updates the value of $\eta$ according to Algorithm~\ref{OuterLayer}, obtaining an updated $\eta^{(n)}$. \label{ParallelStep03}
\item The central controller determines whether to stop the iteration or not. If yes, then the central controller sends a stop command to each parallel processor. Otherwise, send $\eta^{(n)}$ to all parallel processors and return to Step 2.\label{ParallelStep04}
\end{enumerate}

It is seen from the above steps that the main system overhead of the proposed algorithm consists of the following parts. In the initialization, all CSI needs to be shared among $K$ parallel processors which causes signalling overhead, and the central controller shall distribute the energy efficiency factor $\eta$ to all parallel processors. In each iteration, as seen from Algorithm~\ref{SubProblemSolution}, updating the corresponding variables in Step~\ref{ParallelStep02} requires that all parallel processors send the updated parameters to the central controller and share them among $K$ parallel processors, including $\sum_{k=1}^{N_{j}}\left|\bm{h}_{j,m,n}^{H}\bm{w}_{j,k}\right|^{2}$, $s_{j,k}$, and $\left|\mu_{j,k}\right|^2$. Also, as seen from Algorithm~\ref{OuterLayer}, updating the energy efficiency factor $\eta^{(n)}$ in Step~\ref{ParallelStep03} requires that all parallel processors send the updated transmit power, i.e., $\sum_{k=1}^{N_{j}}\left\|\bm{w}_{j,k}\right\|^{2}$, to the central controller.  In summary, the execution of the algorithm requires a total of $\kappa_{1}\left(3\kappa_{2}\sum_{j=1}^{K}N_{j}+K+1\right)$ real numbers to be shared among all parallel processors and the central controller, where $\kappa_{1}$ and $\kappa_{2}$ denote the running number of Step~\ref{ParallelStep02} and Step~\ref{ParallelStep04} required by the convergence of the algorithm, respectively.

\section*{\sc \uppercase\expandafter{\romannumeral5}. Simulation Results}

In this section, we investigate the performance of the proposed multicell beamforming algorithm via numerical simulations. We consider a cooperative cluster of $K=3$ hexagonal adjacent cells where each BS-$j$ is equipped with $M_{j}$ transmit antennas and serves $N_{j}$ single antenna users in cell $j$. The cell radius is set to be $500$ m and each user has at least $400$ m distance from its serving BS. The channel vector $\bm{h}_{m,j,k}$ from BS $m$ to User-$\left(j,k\right)$ is generated based on the formulation $\bm{h}_{m,j,k}\triangleq\sqrt{\theta_{m,j,k}}\bm{h}_{m,j,k}^{w}$, where $\bm{h}_{m,j,k}^{w}$ denotes the small scale fading part and is assumed to be Gaussian distributed with zero mean and identity covariance matrix, and $\theta_{m,j,k}$ denotes the large scale fading factor which in decibels is given as $10\log_{10}(\theta_{m,j,k})=-38\log_{10}(d_{m,j,k})-34.5+\eta_{m,j,k}$, where $\eta_{m,j,k}$ represents the log-normal shadow fading with zero mean and standard deviation $8$ dB~\cite{3GPP}. The circuit power per antenna is $P_{c}=30$ dBm~\cite{ConKumar2011}, and the basic power consumed at the BS is $P_{0}=40$ dBm~\cite{ConArnold2010}. As for the power constraints, we assume that each BS has the same power constraint over the whole bandwidth. The noise figure is $9$ dB. The weighted factor $\alpha_{j,k}$ is set to unit for any $j$ and $k$. The inefficiency factor of power amplifier $\xi$ is set to unit. The convergence thresholds are given as $\delta=10^{-3}$ and $\varepsilon=10^{-5}$.

For comparison, the performance of the WMMSE algorithm which aims to only maximize the sum rate is simulated~\cite{TSPShi2011}. In addition, the performance of the power allocation algorithm which aims to maximize the energy efficiency by optimizing only the transmit power for fixed transmit beamformers, such as the maximum ratio transmission (MRT) beamforming or random beamforming, is investigated too, which is simulated by using jointly the fractional programming method and the convex approximation power allocation method in~\cite{TSPCod2007}.

Note that though the convergence of our proposed iterative scheme is proved, the global optimality cannot be guaranteed. Therefore it is important to examine the gap of the proposed solution from the optimum. Fig.~\ref{SubProblemAffections} shows the energy efficiency performance of the proposed scheme under a few random channel realizations for the configuration $P_{j}=46$ dBm, $\forall j$. The optimal energy efficiency is achieved by solving the sub-problem (\ref{EngergyEfficiency17}) for each $\eta$ with Algorithm~\ref{SubProblemSolution} over $10000$ random beamforming initializations and then choosing the best result. Numerical results corroborate that in most cases, our proposed energy efficiency optimization algorithm can always achieve over $96\%$ of the optimal performance, revealing that our solution achieves a near-optimal energy efficient performance. Fig.~\ref{ConvergenceTrajectory} shows the convergence behavior of Algorithm~\ref{SubProblemSolution} under a few random channel realizations for the configuration $P_{j}=46$ dBm, $\forall j$, using random beamforming initialization. The optimal sum rate is the best result among the rates achieved by Algorithm~\ref{SubProblemSolution} over $10000$ random beamforming initializations. Numerical results show that Algorithm~\ref{SubProblemSolution} always converges to a stable point in a limited number of iterations. Though it is observed that different initialization points may have slightly different effect on the performance and the convergence speed, our algorithm always achieves over $99\%$ of the optimal sum rate performance.

\begin{figure}[h]
\centering
\captionstyle{flushleft}
\onelinecaptionstrue
\includegraphics[width=0.8\columnwidth,keepaspectratio]{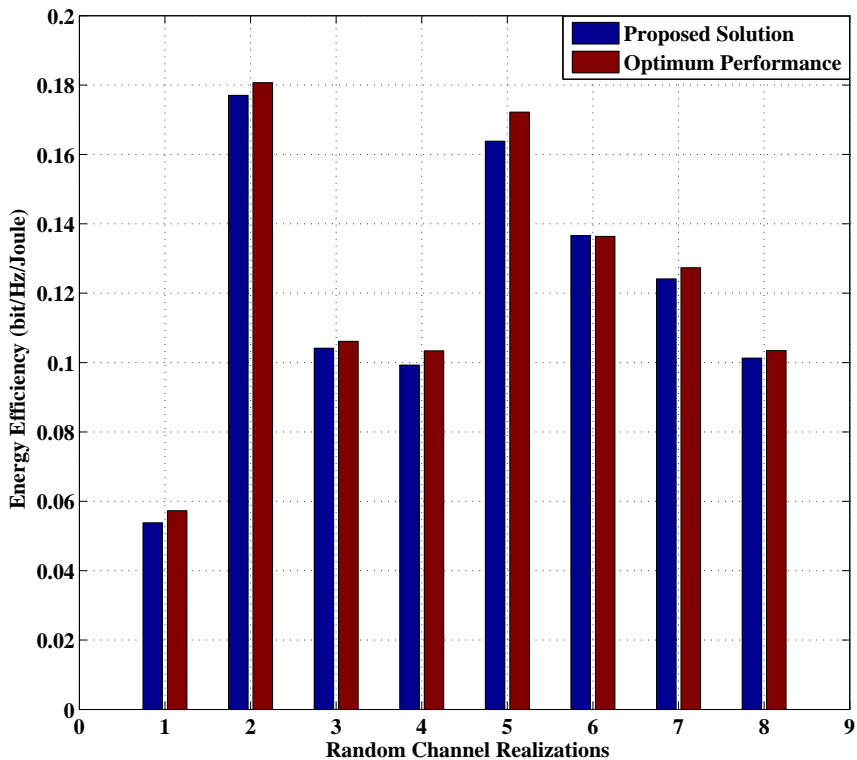}\\
\caption{The energy efficiency of the proposed solution in contrast to the optimum, $M_{j}=4$, $N_{j}=1$, $\forall j$.}
\label{SubProblemAffections}
\end{figure}

\begin{figure}[h]
\centering
\captionstyle{flushleft}
\onelinecaptionstrue
\includegraphics[width=0.8\columnwidth,keepaspectratio]{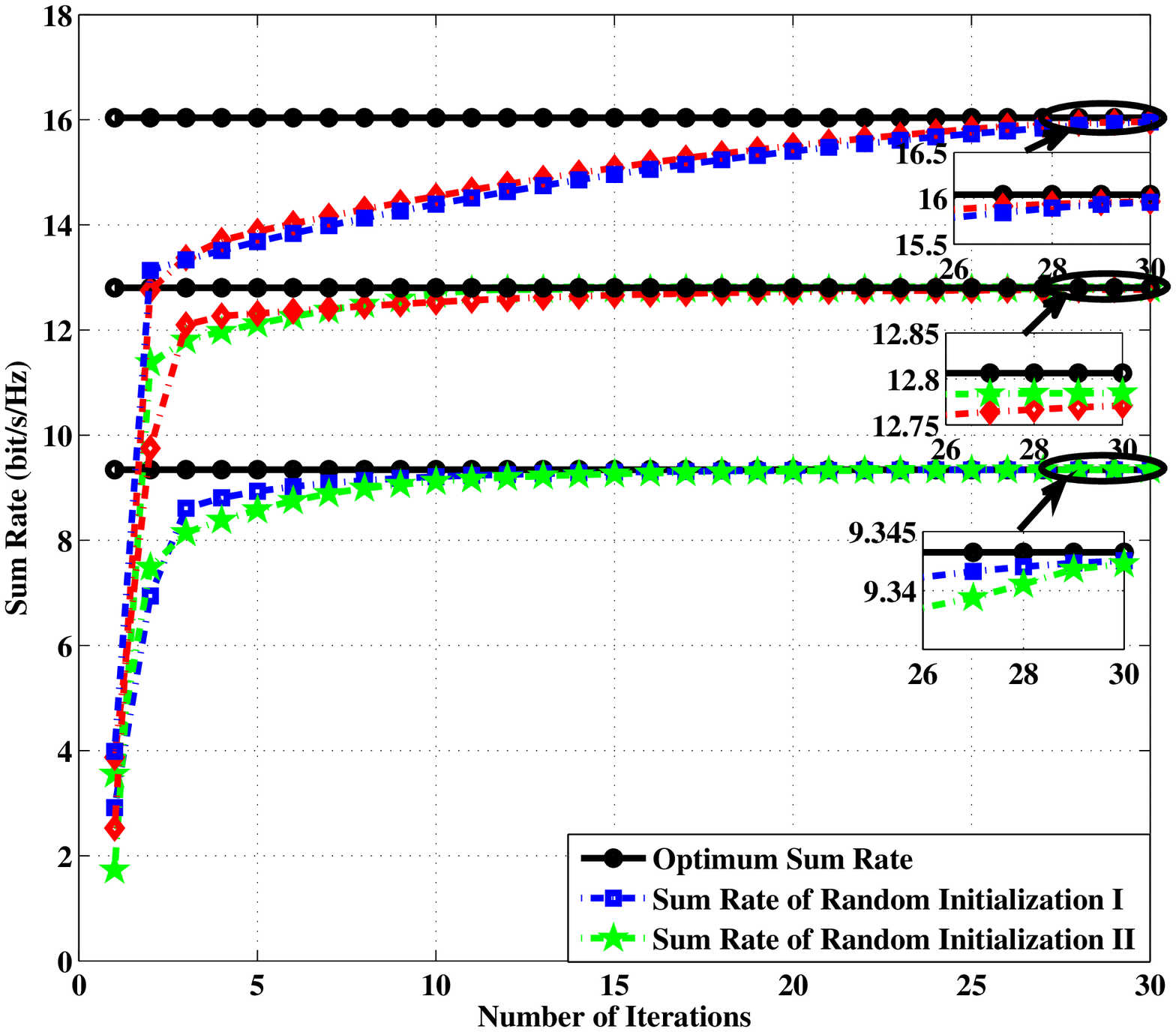}\\
\caption{Convergence Trajectory of Algorithm \ref{SubProblemSolution}, $M_{j}=4$, $N_{j}=1$, $\forall j$.}
\label{ConvergenceTrajectory}
\end{figure}

Fig.~\ref{MultiUsersE} shows the average energy efficiency of the proposed algorithm and the WMMSE algorithm with different user configurations over $10000$ random channel realizations. The results show that at the low transmit power region such as $26\sim34$ dBm, these two algorithms achieve almost the same energy efficiency, which suggests that at this region, transmitting with the maximum available power is the most energy efficient. It is also shown that the energy efficiency of the proposed algorithm obviously outperforms the WMMSE algorithm at the high transmit power region. This is because in the WMMSE algorithm the capacity gain cannot compensate for the negative impact of the maximum power consumption, resulting in a low energy efficiency. Numerical results also illustrate that the average energy efficiency increases as the number of served users grows, but the performance gain shrinks with the number of served users increases.

\begin{figure}[h]
\centering
\captionstyle{flushleft}
\onelinecaptionstrue
\includegraphics[width=0.8\columnwidth,keepaspectratio]{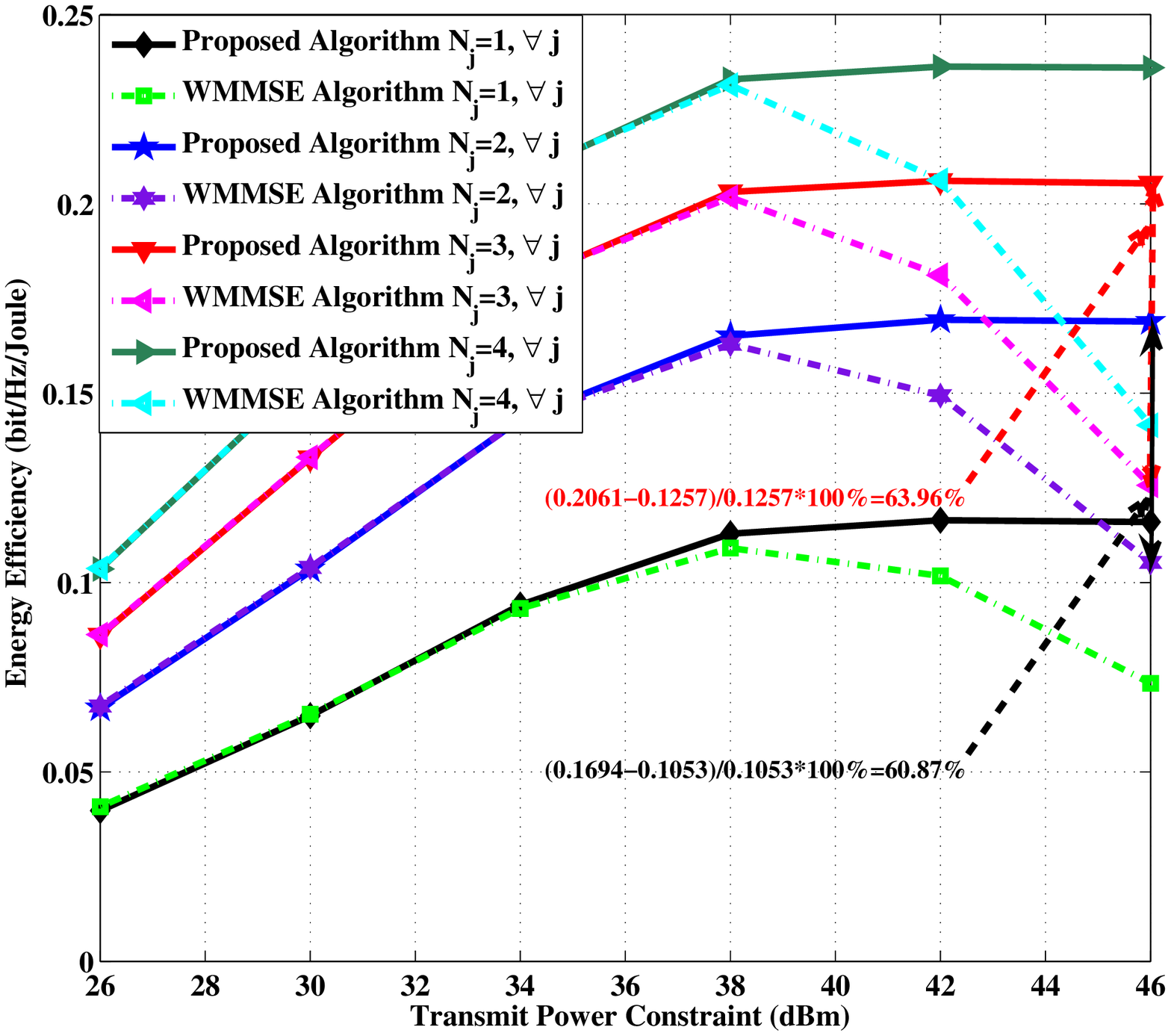}\\
\caption{Energy efficiency comparison Vs transmit power constraint, $M_{j}=4$, $\forall j$.}
\label{MultiUsersE}
\end{figure}

Fig.~\ref{MultiMethods} illustrates the average energy efficiency of the proposed algorithm, the WMMSE algorithm and the power allocation algorithm with fixed transmit beamformers,  over $10000$ random channel realizations under a scenario where each user has at least $415$ m distance from its serving BS. Numerical results show that the proposed algorithm and the WMMSE algorithm achieve obvious performance gain than the power allocation algorithm in terms of energy efficiency. This implies that the transmit beamforming optimization plays a key role in our proposed algorithm and the WMMSE algorithm. In other words, transmit beamforming vectors and the transmit power allocation should be jointly optimized for designing an energy efficient transmission design. In addition, numerical results also corroborate that the energy efficiency of the power allocation algorithm is saturated at the high transmit power region. Comparing with the observations in Fig.~\ref{MultiUsersE}, it is shown that though the energy efficiency performances of both our algorithm and the WMMSE algorithm decrease when the users move to the cell edge, the gain of the proposed algorithm over the WMMSE algorithm become more obvious.

\begin{figure}[h]
\centering
\captionstyle{flushleft}
\onelinecaptionstrue
\includegraphics[width=0.8\columnwidth,keepaspectratio]{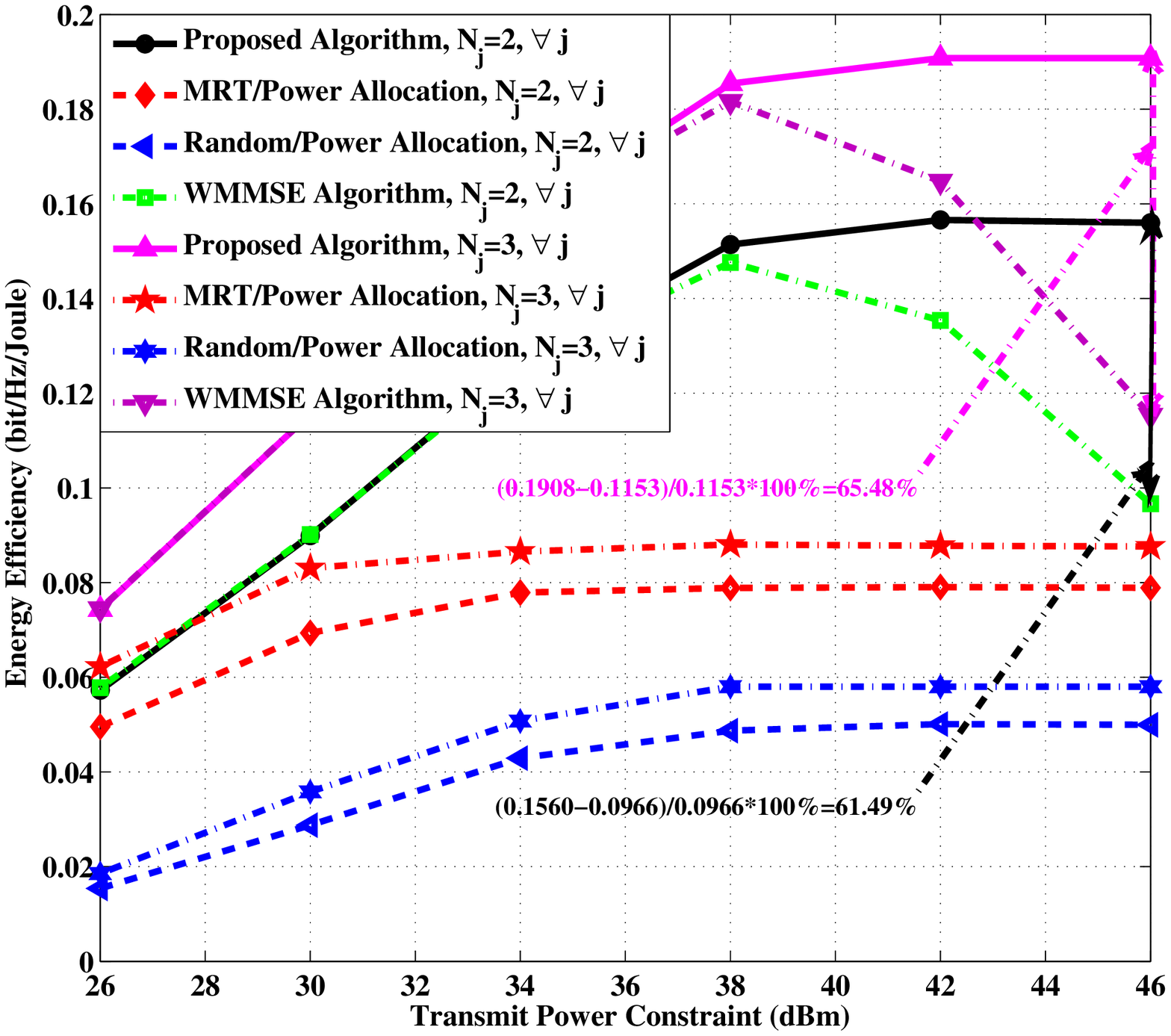}\\
\caption{Energy efficiency comparison Vs transmit power constraint, $M_{j}=4$, $\forall j$.}
\label{MultiMethods}
\end{figure}

Fig.~\ref{Smallcell} illustrates the average energy efficiency of the proposed algorithm and the WMMSE algorithm in a small cell environment where the cell radius is set to $100$ m and each user has at least $70$ m distance from its serving BS. In this case the large scale fading factor in decibels is given as $10\log_{10}(\theta_{m,j,k})=-30\log_{10}(d_{m,j,k})-38+\eta_{m,j,k}$~\cite{3GPP}. Numerical results show that the proposed algorithm outperforms the WMMSE algorithm even at the low transmit power region from $20$ to $30$ dBm. Note that at this region in a normal size cell environment, as shown in Fig.~\ref{MultiUsersE},  these two algorithms almost achieve the same energy efficiency performance.

\begin{figure}[h]
\centering
\captionstyle{flushleft}
\onelinecaptionstrue
\includegraphics[width=0.8\columnwidth,keepaspectratio]{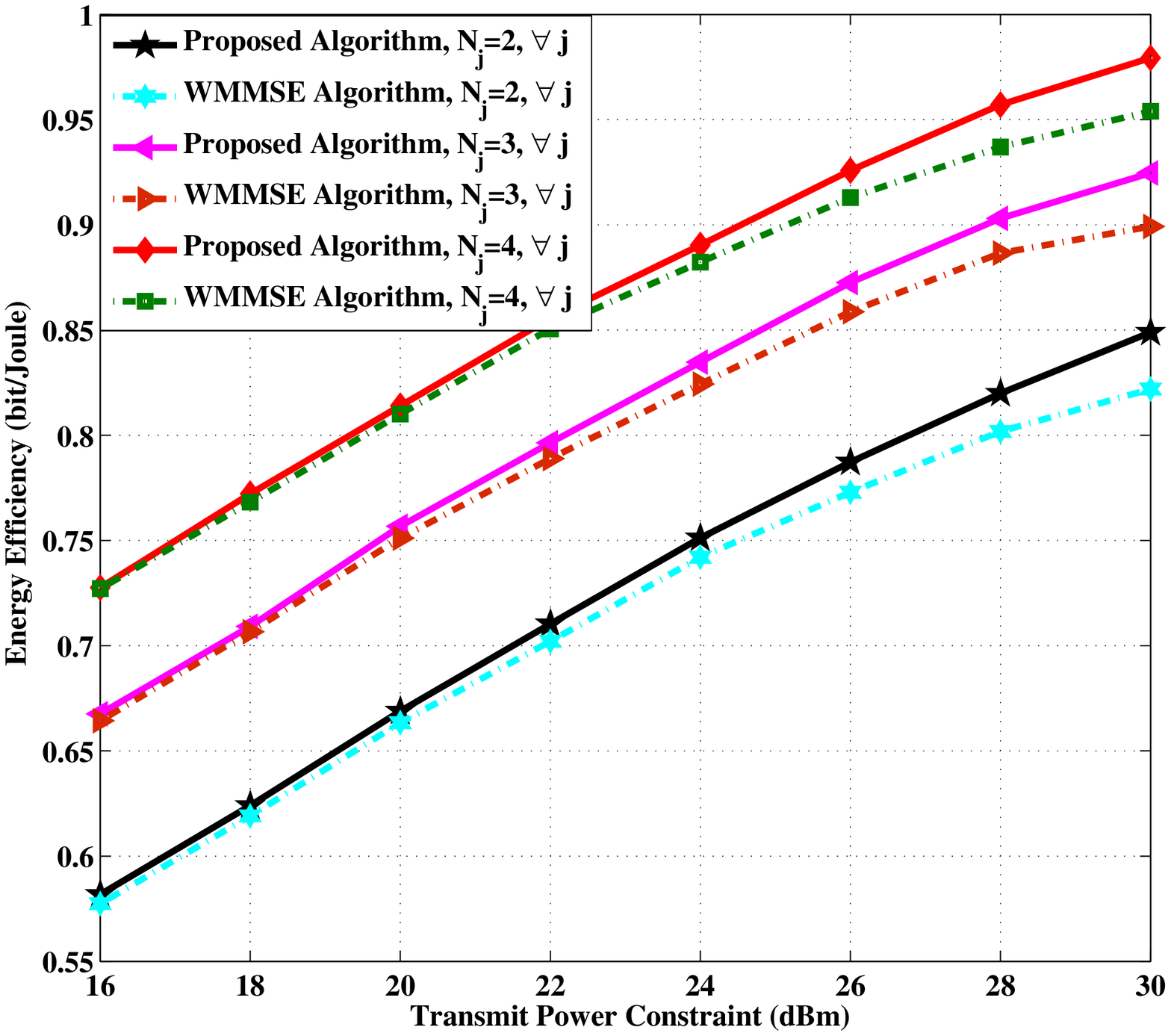}\\
\caption{Energy efficiency comparison Vs transmit power constraint, $M_{j}=4$, $\forall j$.}
\label{Smallcell}
\end{figure}

Fig.~\ref{AntennasPowers} shows that the average energy efficiency of the above two algorithms varying with the number of transmit antennas over $2000$ random channel realizations, where the number of the served users at each BS is configured to increase with the number of transmit antennas according to a fixed ratio which is set to $1:4$ in our simulations. Numerical results show that the the energy efficiency of these two algorithms both increase with the number of transmit antennas. The proposed algorithm exhibits obvious advantage over the WMMSE algorithm at $P_{j}=46$ dBm especially when the number of transmit antennas is not so large. It is also observed that the gain of the proposed algorithm over the WMMSE tends to shrink with the number of transmit antennas, implying that the sum rate optimal scheme shows high energy efficiency in the large scale MIMO system.

\begin{figure}[h]
\centering
\captionstyle{flushleft}
\onelinecaptionstrue
\includegraphics[width=0.8\columnwidth,keepaspectratio]{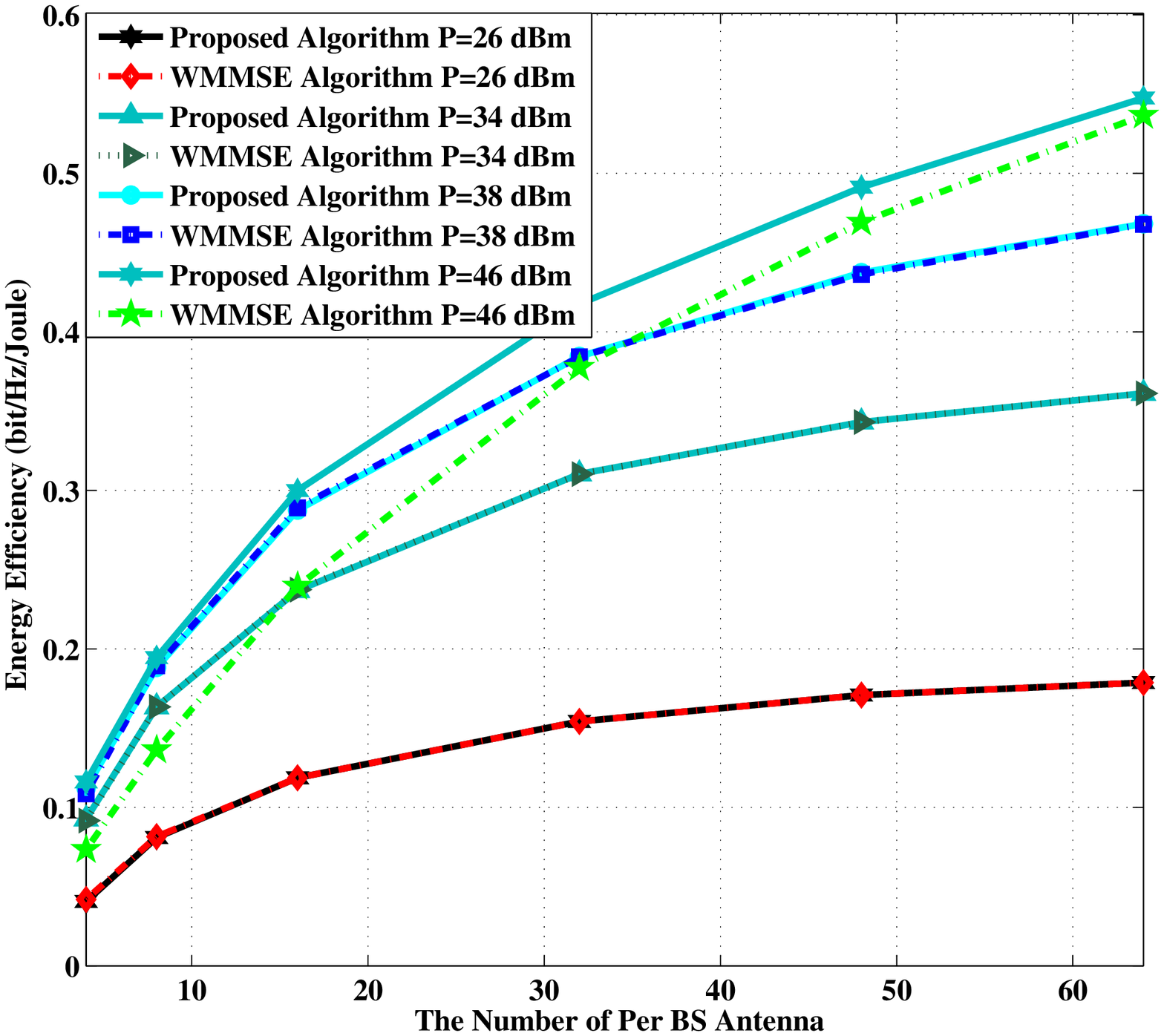}\\
\caption{Energy efficiency comparison Vs the number of per-BS Antenna.}
\label{AntennasPowers}
\end{figure}

Fig.~\ref{EEVSConstants} illustrates the average energy efficiency of the above two algorithms for different values of circuit power per antenna over $10000$ random channel realizations. It can be seen that the energy efficiency performance of both two algorithms improves with the circuit power per antenna decreasing. If $P_{c}$ can be reduced from 40dBm to 30dBm, the energy efficiency of the proposed algorithm improves by over 100\% gain. Moreover, it is also observed that the upper bound of the transmit power region where the proposed algorithm and the WMMSE algorithm achieve the same energy efficiency performance moves from 42dBm to 38dBm, implying that the proposed algorithm achieves more energy efficiency gain over the sum rate maximization algorithm if the circuit power is reduced.

\begin{figure}[h]
\centering
\captionstyle{flushleft}
\onelinecaptionstrue
\includegraphics[width=0.8\columnwidth,keepaspectratio]{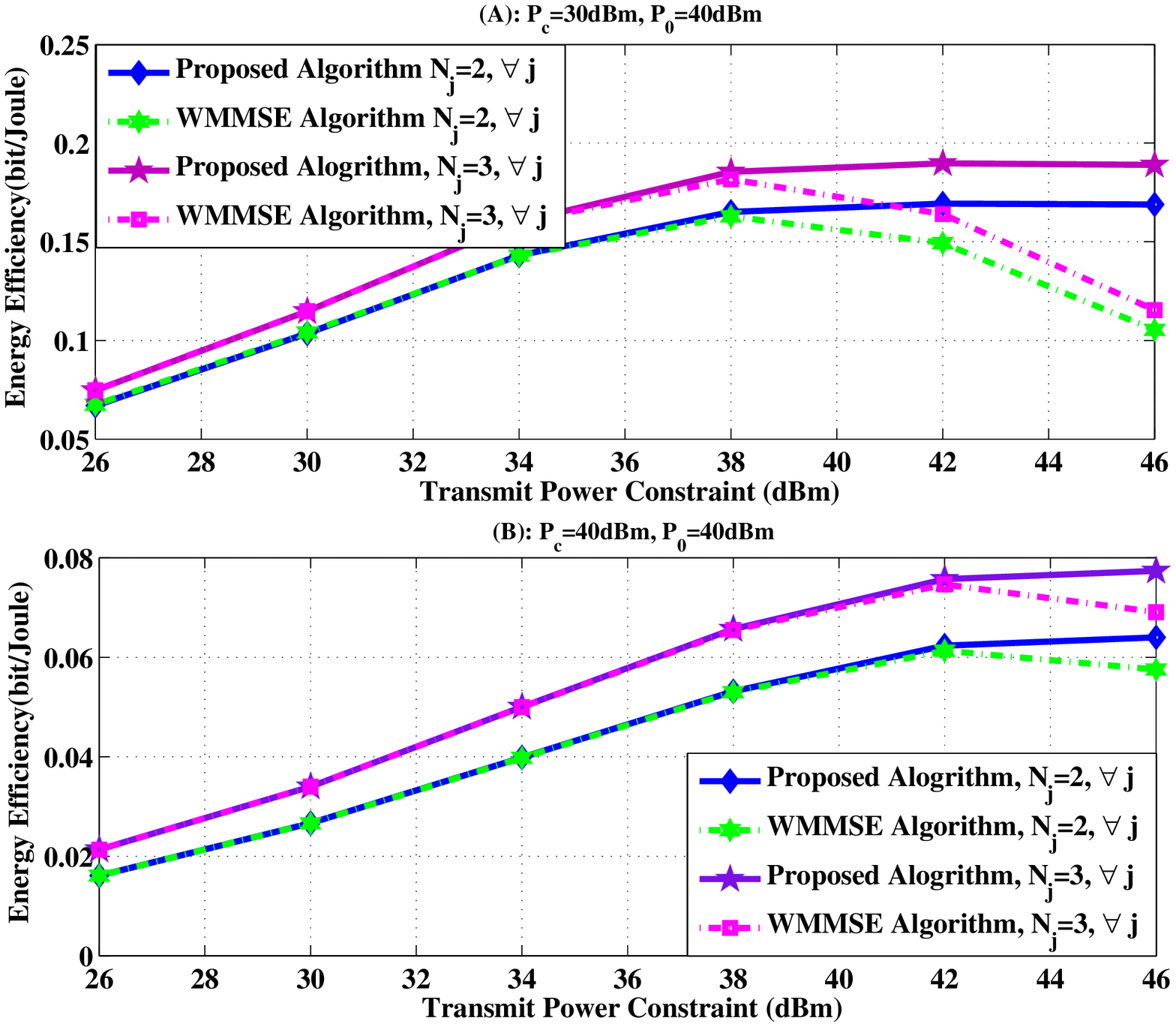}\\
\caption{Energy efficiency comparison Vs Different circuit power per antenna, $M_{j}=4$, $\forall j$.}
\label{EEVSConstants}
\end{figure}

Fig.~\ref{SRMultiUsers} shows the average sum rate of the proposed algorithm and the WMMSE algorithm over $10000$ random channel realizations. One can see that the proposed algorithm achieves the same sum rate as the WMMSE algorithm at the low transmit power region. While at the high transmit power region, the average sum rate of the proposed algorithm becomes saturated. This is because the proposed algorithm tends to reduce the transmit power at this region in order to maximize the system energy efficiency, while the WMMSE algorithm targets at the maximum sum rate which usually always transmits at a full power.

\begin{figure}[h]
\centering
\captionstyle{flushleft}
\onelinecaptionstrue
\includegraphics[width=0.8\columnwidth,keepaspectratio]{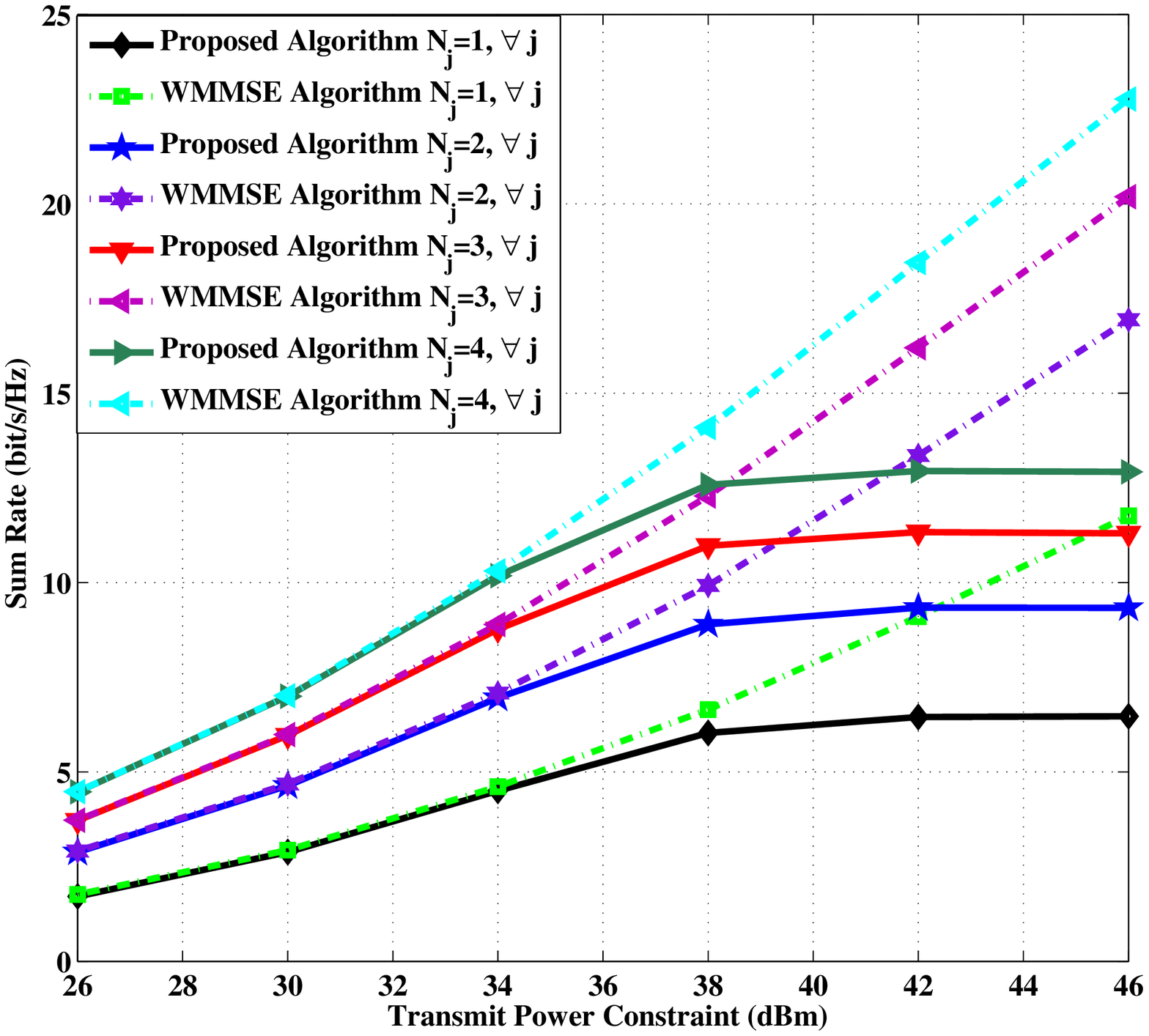}\\
\caption{Sum rate comparison Vs transmit power constraint, $M_{j}=4$, $\forall j$.}
\label{SRMultiUsers}
\end{figure}

\section*{\sc \uppercase\expandafter{\romannumeral6}. Conclusions}

In this paper, we studied the energy efficient coordinated beamforming and power allocation for the multicell multiuser downlink system. The original optimization problem is non-convex and in a fractional form. To solve it, using fractional programming, the original optimization problem was transformed into an equivalent subtractive form problem. An efficient optimization algorithm was then developed to find a solution to the equivalent optimization problem. The convergence of the proposed algorithm was proved and the solution was further derived in closed form. Numerical results illustrated that the proposed algorithm always converges to a stable point within a limited number of iterations and achieved a near-optimal performance. In particular, it was observed that at the low transmit power region the proposed algorithm obtained both the near-optimal energy efficiency and the near-optimal sum rate at the same time. Interesting topics for future work include studying the impact of imperfect CSI at each transmitter on the energy efficiency and finding the global optimum to the sub-problem for each given energy efficiency factor $\eta$.

\begin{small}

\end{small}
\end{document}